\newtheorem{theorem}{Theorem}[section]    
\newtheorem{corollary}[theorem]{Corollary}    
\newtheorem{proposition}[theorem]{Proposition}    
\newtheorem{lemma}[theorem]{Lemma}    
\renewcommand{\qed}{\hfill{$\rule{6pt}{6pt}$}} 
\renewenvironment{proof}{\noindent{\bf Proof:}}{\qed\\}
\newenvironment{proofof}[1]{\noindent{\bf Proof of #1:}}{\qed\\}
\numberwithin{equation}{section}
\newcommand{\complex}{{\mathbb C}}
\newcommand{\reals}{{\mathbb R}}
\newcommand{\naturals}{{\mathbb N}}
\newcommand{\ket}[1]{| #1 \rangle}
\newcommand{\bra}[1]{ \langle #1 |}
\newcommand{\ketbra}[2]{| #1 \rangle\!\langle #2 |}
\newcommand{\norm}[1]{\left\| #1 \right\|}
\newcommand{\trnorm}[1]{\left\| #1 \right\|_{\mathrm{tr}}}
\newcommand{\size}[1]{\left| #1 \right|}
\newcommand{\set}[1]{\left\{ #1 \right\}}
\newcommand{\ceil}[1]{{\lceil #1 \rceil}}
\newcommand{\trace}{{\mathrm{Tr}}}
\newcommand{\support}{{\mathrm{supp}}}
\newcommand{\Order}{{\mathrm{O}}}
\newcommand{\order}{{\mathrm{o}}}
\newcommand{\density}[1]{\ketbra{#1}{#1}}
\newcommand{\ancilla}{\ket{\bar{0}}}
\newcommand{\e}{{\mathrm{e}}}
\newcommand{\expct}{{\mathbb E}}
\newcommand{\abs}[1]{\left| #1 \right|}
\newcommand{\id}{{\mathbb 1}}
\newcommand{\linear}{{\mathsf L}}
\newcommand{\unitary}{{\mathsf U}}
\newcommand{\sphere}{\mathrm{Sphere}}
\newcommand{\qstate}{{\mathsf D}}
\newcommand{\eqdef}{\coloneqq}
\newcommand{\tensor}{\otimes}
\newcommand{\suppress}[1]{}
\newcommand{\comment}[1]{}
\newcommand{\rS}{{\mathrm S}}
\newcommand{\rI}{{\mathrm I}}
\newcommand{\rF}{{\mathrm F}}
\newcommand{\Imax}{{\mathrm I}_{\max}}
\newcommand{\Smax}{{\mathrm S}_{\max}}
\newcommand{\Smin}{{\mathrm S}_{\min}}
\newcommand{\rP}{{\mathrm P}}
\newcommand{\QIC}{{\mathrm {QIC}}}
\newcommand{\Pos}{{\mathsf{Pos}}}
\newcommand{\sB}{{\mathsf B}}
\newcommand{\cK}{{\mathcal K}}
\newcommand{\cH}{{\mathcal H}}
\newcommand{\cM}{{\mathcal M}}
\newcommand{\cX}{{\mathcal X}}
\newcommand{\cY}{{\mathcal Y}}
\newcommand{\bU}{{\bm U}}
\newcommand{\bcZ}{{\bm{\mathcal Z}}}
\newcommand{\bM}{{\bm M}}
\newcommand{\bB}{{\bm B}}
\newcommand{\bE}{{\bm E}}
\newcommand{\bI}{{\bm I}}
\newcommand{\cA}{{\mathcal A}}
\newcommand{\cB}{{\mathcal B}}
\newcommand{\cC}{{\mathcal C}}
\newcommand{\cW}{{\mathcal W}}
\newcommand{\fT}{{\mathfrak T}}
\newcommand{\rA}{{\mathrm A}}
\newcommand{\rB}{{\mathrm B}}
\newcommand{\out}{{\mathrm{out}}}
\newcommand{\rin}{{\mathrm{in}}}
\begin{document}

\title{On the Entanglement Cost of One-Shot Compression}

\author[1]{Shima Bab Hadiashar}
\orcid{0000-0001-6707-6438}
\email{sbabhadi@uwaterloo.ca}
\author[1]{Ashwin Nayak}
\email{ashwin.nayak@uwaterloo.ca}
\orcid{0000-0001-9866-9316}
\affiliation[1]{Department of Combinatorics and Optimization,
and Institute for Quantum Computing, University
of Waterloo, 200 University Ave.\ W., Waterloo, ON,
N2L~3G1, Canada.}

\maketitle

\begin{abstract}
  We revisit the task of visible compression of an ensemble of quantum
states with entanglement assistance in the one-shot setting. The
protocols achieving the best compression use many more qubits of shared
entanglement than the number of qubits in the states in the ensemble. Other
compression protocols, with potentially larger communication cost, have
entanglement cost bounded by the number of qubits in the given states.  This
motivates the question as to whether entanglement is truly necessary for
compression, and if so, how much of it is needed.

Motivated by questions in communication complexity, we lift certain
restrictions that are placed on compression protocols in tasks such as
state-splitting and channel simulation. We show that an ensemble of the
form designed by Jain, Radhakrishnan, and Sen (ICALP'03) saturates the
known bounds on the sum of communication and entanglement costs, even
with the relaxed compression protocols we study.

The ensemble and the associated one-way communication protocol have 
several remarkable properties. The ensemble is incompressible by
more than a constant number of qubits without shared entanglement, even when
constant error is allowed. Moreover,
in the presence of shared entanglement, the communication cost of compression 
can be arbitrarily smaller than the entanglement cost. The quantum 
information cost of the protocol can thus be arbitrarily smaller
than the cost of compression without shared entanglement. The ensemble can also
be used to show the impossibility of reducing, via compression, the
shared entanglement used in two-party protocols for computing Boolean
functions.
\end{abstract}

\section{Introduction}

\subsection{Visible compression}

Compression of quantum states is a fundamental task in information
processing. In the simplest setting, we have two spatially separated
parties, commonly called Alice and Bob, who can communicate with each
other by exchanging quantum states. They have in mind an ensemble
of~$m$-dimensional quantum states
\begin{equation}
\label{eq-ensemble-intro}
\left( (p_x, \rho_x) : x \in S, ~ \rho_x \in \qstate(\complex^m) 
    \right) \enspace,
\end{equation}
where~$S$ is some
non-empty finite set, and~$p$ is a probability distribution over~$S$.
Alice gets an input~$x \in S$ with probability~$p_x$, and 
would like to send a message, i.e., a quantum state~$\sigma_x \in 
\qstate( \complex^d )$ to Bob so that he can recover the state~$\rho_x$,
or even an approximation to it.  Since the input~$x$ completely
specifies the corresponding state~$\rho_x$, this variant of the task
is called \emph{visible\/} compression.  The \emph{communication cost\/}
of the protocol is~$\log d$, the length of the message in qubits.
Their goal is to accomplish this with as short a message as possible, 
i.e., to minimize the dimension~$d$. A central question in quantum 
information theory is whether there is a simple characterization of 
the optimal communication cost in terms of the ``information content'' 
of the ensemble.

An additional resource that Alice and Bob may use in compression 
is a shared entangled state. In other words, the two parties may start
with their qubits initialized to a fixed pure quantum state independent of the
input received by Alice. The local quantum operations performed
for compression and decompression then also involve the respective parts
of the shared state. This is depicted in Figure~\ref{fig-compression},
and the protocol (or channel) is said to be \emph{with shared 
entanglement\/} or \emph{entanglement assisted\/}.
As we may expect, the communication cost may decrease due to the availability
of this additional resource. The \emph{entanglement cost\/} of a protocol
is the minimal dimension of the support of either party's share of the initial
state (measured in qubits) required to achieve some communication cost.
(We discuss the notion of entanglement cost in detail
in Section~\ref{sec-concl}.)
We would also like to characterize the entanglement cost in this setting, 
in addition to the communication cost.

\begin{figure}[h]
\centering
\includegraphics[width=400pt]{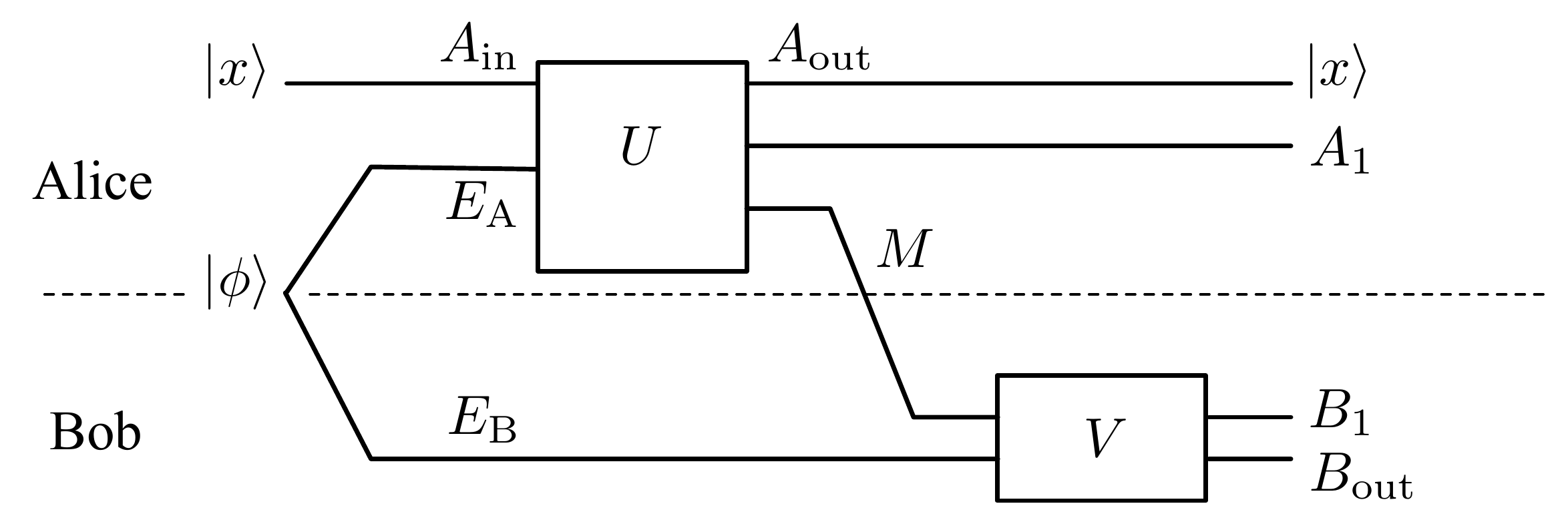}
\caption{A one-message protocol for compression of quantum states, with
shared entanglement. The register~$A_\rin$ holds the input given to Alice,
and~$E_\rA$ contains Alice's workspace and her part of the initial shared 
state (the shared entanglement). The register~$E_\rB$ contains Bob's
workspace and his part of the initial shared state. The compression is 
implemented by the isometry~$U$, and the register~$M$ contains the 
compressed state and is sent as the message. The decompression is 
implemented by the isometry~$V$. Bob's output is contained in the 
register~$B_\out$.}
\label{fig-compression}
\end{figure}

Compression problems similar to the one above have been studied
extensively in quantum information theory, both in the \emph{one-shot\/}
setting (the one we described above), and in the \emph{asymptotic
setting\/} (where the sender's input consists of multiple samples picked 
independently from the same distribution). The problem has been studied
in early works such as Ref.~\cite{BCFJS01-compression} in the setting of
quantum communication without shared entanglement. It is known as
\emph{remote state preparation\/} when allowed one-way communication
over a classical channel with shared entanglement. We refer the reader
to Ref.~\cite[Table~I]{BNR18} for a summary of the work on remote state
preparation; we describe the most relevant results---in the one-shot
setting---below.

Other tasks in the literature that come close to the one above are 
\emph{state splitting\/} (see, e.g., Ref.~\cite{BCR11-reverse-Shannon}),
and that of channel simulation in the context of the \emph{Quantum
Reverse Shannon
Theorem\/}~\cite{BDHSW14-reverse-Shannon,BCR11-reverse-Shannon}. State
splitting is the time reversal~\cite{Dev06-Tri-comm-protocol} of
\emph{state merging\/}~\cite{HOW05-state-merging,HOW07-state-merging},
and was called the ``fully quantum reverse Shannon protocol'' in
Ref.~\cite{Dev06-Tri-comm-protocol}.  We explain the connection to state
splitting in detail in Section~\ref{sec-compression}.

In both state splitting and channel simulation, the protocol is required
to be ``coherent'' in specific ways. In particular, in compressing an
ensemble of states as in Eq.~(\ref{eq-ensemble-intro}), at the end of
the protocol, Bob would be required to hold an approximation to the
state~$\rho_x$ and Alice a purification of this state. In contrast to
these tasks, we do not require that the compression protocol maintain
such coherence. More precisely, the registers containing a purification
of the output state may be shared by Alice and Bob. Such compression
protocols are more relevant in the context of two-party communication
protocols studied in complexity theory, especially in the context of
\emph{direct sum\/} and \emph{direct product\/} results (see e.g.,
Refs.~\cite{JRS03-direct-sum, JRS08-direct-sum, Touchette15-QIC} and the
references therein).  In communication complexity, a typical goal is to
compute a bivariate Boolean function when the inputs are distributed
between two parties. The parties communicate with each other,
alternating messages with local computation, and at the end, one party
produces the output of the protocol from the part of the final state in
her possession. As a result, the output of the protocol does not depend
on the part of the state held by the other party (i.e., on the
purification of her part of the final joint state). A compression scheme
for the final state then need only focus on the part being measured for
the output.

\subsection{Entanglement cost of compression}

Jain, Radhakrishnan, and Sen~\cite{JRS05-compression,JRS08-direct-sum}
gave a one-shot protocol for compressing an ensemble of states as in
Eq.~(\ref{eq-ensemble-intro}), and bounded its communication cost by
$\Order( \rI(A : B)_\tau / \epsilon^3)$, where~$\rI(A : B)_\tau$ is the
mutual information between registers~$A$ and~$B$ in the state~$\tau^{AB}
\coloneqq \tfrac{1}{n} \sum_{x \in [n]} \density{x}^A \tensor \rho_x^B$,
and~$\epsilon$ is the average approximation error (cf.\
Section~\ref{sec-compression} for a precise definition of average
error). Using a more refined application of their technique, Bab
Hadiashar, Nayak, and Renner~\cite{BNR18} tightly characterized the
communication cost of the task in terms of the \emph{smooth
max-information\/}, a one-shot entropic analogue of mutual
information. Their results are stated for entanglement-assisted
classical channels and use purified distance to quantify the
approximation, but translate immediately to the setting here
through the use of superdense coding~\cite[Section~6.3.1]{W18-TQI} and
the Fuchs and van de Graaf Inequalities (Proposition~\ref{prop-FvdG}). The
upper bound so obtained is
\[
\frac{1}{2} \, \Imax^{\epsilon/\sqrt{2}}(A:B)_{\tau} + \Order(\log \log (
    1/ \epsilon)) \enspace.
\]
This is slightly better than that derived from
protocols for state splitting in terms of the approximation error; it
has an additive term of~$\Order(\log \log \tfrac{1}{\epsilon})$ for
average error~$\epsilon$ versus the additive term of~$\Order(\log
\tfrac{1}{\epsilon})$ in
Ref.~\cite[Corollary~5]{AJ18-efficient-convex-split}. However, both these
protocols use shared entanglement that may be much longer than the
message itself, namely~$\Order(k (\log \tfrac{1}{\epsilon}) \log
m)$ qubits and~$\Order( (1 + 1/\epsilon^2) \log_2 (m / \epsilon))$
qubits, respectively, where~$\log_2 k = \rI(A:B)_\tau$, and~$m$ is the
dimension of the states in the ensemble. On the other hand, earlier
protocols for state splitting~\cite[Lemma~3.5]{BCR11-reverse-Shannon},
with potentially larger communication cost, have entanglement cost
bounded by~$\log m$. Since sharing entanglement also entails
some communication, in addition to the preparation and storage of a
potentially delicate high dimensional state, this motivates the question
as to whether shared entanglement is truly necessary for compression,
and if so, how much of it is needed.

For the more restrictive task of state splitting, it follows from the
proof of the converse bound for one-shot entanglement consumption due to
Berta, Christandl, and
Touchette~\cite[Proposition~10]{BCT16-state-redistribution} that the sum
of the communication and entanglement costs is at least the
\emph{min-entropy\/}~$\Smin(\rho)$ of the ensemble average state~$\rho
\coloneqq \sum_x p_x \rho_x$. (Although the proof is written assuming
that the shared state consists of EPR pairs and some ancilla and an
auxiliary error parameter, it may be modified to give a bound when an
arbitrary state is shared and the auxiliary error is~$0$.) In this
article, we show that there are ensembles for which the min-entropy
bound equals the number of qubits in the states, and the bound
holds up to an additive constant even with the more general compression
protocols we allow.
\begin{theorem}
\label{thm-result}
There exist universal constants~$c_1 , c_2 >0$ such that for any~$\epsilon \in (0,1)$, and any~$k,m \in \naturals$ with~$k \geq 6/(1-\epsilon)$ and~$m \geq c_1 (\ln k)/(1-\epsilon)^2$ such that~$k$ divides~$m$, there exists an ensemble~$\left( \left( \tfrac{1}{n}, \rho_x \right) : x \in [n],   ~ \rho_x  
\in \qstate( \complex^m) \right)$, where~$n$ depends on~$k,m$, and~$\epsilon$, such that
\begin{itemize}
\item[(i)] 
$\rI(A:B)_{\tau} = \Imax(A:B)_{\tau} = \log_2 k$, where~$\tau \coloneqq \tfrac{1}{n} \sum_{x \in [n]} \density{x}^{A} \tensor \rho_x^{B}$ ~;
\item[(ii)] 
there is a one-way protocol with shared entanglement for the visible compression of the ensemble with average error~$\epsilon/2$ and with communication cost~$\tfrac{1}{2} \log k + \Order(\log \log \tfrac{1}{\epsilon})$; and
\item[(iii)] 
the sum of communication and entanglement costs of \emph{any\/} one-way protocol with shared entanglement for visible compression of the ensemble, with average-error at most~$\epsilon/2$, is at least
\[
\log m - 3 \log \frac{1}{1 - \epsilon } - c_2 \enspace.
\]
\end{itemize}
\end{theorem}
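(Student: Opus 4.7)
The plan is to construct a suitable ensemble (following Jain, Radhakrishnan, and Sen~\cite{JRS03-direct-sum}), verify its mutual-information parameters for part~(i), invoke the compression upper bound of Ref.~\cite{BNR18} for part~(ii), and extend the Berta--Christandl--Touchette min-entropy converse beyond state splitting for part~(iii).

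For the construction I take $\rho_x \eqdef (k/m)\, \Pi_x$, where each~$\Pi_x$ is the projection onto a uniformly random $(m/k)$-dimensional subspace of~$\complex^m$, drawn independently for $x \in [n]$. Haar concentration together with a union bound---whose regime is dictated by the hypothesis $m \geq c_1 (\ln k)/(1-\epsilon)^2$---ensures that for $n$ polynomial in~$m$ and~$1/\epsilon$, the ensemble average $\rho \eqdef (1/n) \sum_x \rho_x$ is within operator-norm distance $O(\epsilon/m)$ of $\id_m/m$ with overwhelming probability; the probabilistic method then fixes an ensemble with the desired deterministic properties. For part~(i), with $\tau^{AB} = (1/n) \sum_x \density{x}^A \tensor \rho_x^B$ we have $\rS(\tau^A) = \log n$, $\rS(B|A)_\tau = (1/n)\sum_x \rS(\rho_x) = \log(m/k)$, and $\rS(\tau^B) \approx \log m$, yielding $\rI(A:B)_\tau \approx \log k$; the classical-quantum form of~$\tau$ reduces $\Imax(A:B)_\tau$ to $\log \max_x \| (\tau^B)^{-1/2} \rho_x (\tau^B)^{-1/2} \|_\infty$, and with $\tau^B \approx \id_m/m$ and each $\rho_x$ having spectrum~$\{k/m\}$ on its support this evaluates to~$\log k$ (equality on the nose can be arranged by passing to a deterministic subset refinement in which $\tau^B = \id_m/m$ exactly). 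Part~(ii) is then immediate from the compression bound recalled in the introduction: the communication for average error~$\epsilon/2$ is at most $\tfrac12 \Imax^{\epsilon/(2\sqrt 2)}(A:B)_\tau + \Order(\log\log(1/\epsilon))$, and $\Imax(A:B)_\tau = \log k$ delivers the stated bound.

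Part~(iii) is the main technical step. The converse of Berta, Christandl, and Touchette~\cite[Proposition~10]{BCT16-state-redistribution} shows that in state splitting the sum of communication and entanglement is at least~$\Smin(\rho)$, with essential intuition that Bob's state prior to receiving the message is independent of Alice's input and thus has entropy at most~$e$, that receiving a $c$-qubit message raises the entropy by at most~$c$, and that Bob's output register---a subsystem of his post-message state---therefore carries (smoothed) min-entropy at most~$c+e$. This intuition survives the relaxation: even when the purification of Bob's output may be split between Alice and Bob, the marginal on~$B_\out$ averaged over uniform inputs is within trace distance~$\epsilon/2$ of~$\rho$, and continuity of the smoothed min-entropy yields a lower bound of $\Smin(\rho) - \Order(\log(1/(1-\epsilon)))$ on its min-entropy. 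Combined with $\Smin(\rho) \geq \log m - \Order(\log(1/(1-\epsilon)))$ (from $\rho \approx \id_m/m$), this gives $c + e \geq \log m - 3\log(1/(1-\epsilon)) - c_2$ after collecting constants.

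The chief obstacle is the careful handling of the relaxed coherence requirement in the converse: because the purification of Bob's output need not reside wholly on Alice's side, one must argue directly on the level of Bob's reduced state and its (smoothed) min-entropy, rather than via the purification-based dimension count that is available in the coherent state-splitting setting. I expect this to follow by a Stinespring-dilation analysis of Alice's and Bob's isometries, combined with continuity bounds for smoothed entropies and the Fuchs--van de Graaf inequalities to translate between trace-distance and fidelity-based error measures; the precise constant~$3$ multiplying $\log(1/(1-\epsilon))$ will reflect how these continuity steps are composed.
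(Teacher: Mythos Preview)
Your construction for parts~(i) and~(ii) is close to the paper's but not quite it: there, for each~$i\in[n/k]$ one draws a Haar-random orthonormal \emph{basis} of~$\complex^m$ and partitions it into~$k$ equal blocks, so the ensemble average is \emph{exactly}~$\id/m$ and $\rI(A:B)_\tau = \Imax(A:B)_\tau = \log k$ holds on the nose, with no approximation or ``subset refinement'' needed.

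The real gap is in part~(iii). Your argument has two halves: (A)~the averaged output~$\widetilde\rho^{B_\out}$ is close to~$\rho\approx\id/m$ and hence has min-entropy near~$\log m$; and (B)~the averaged output has (smoothed) min-entropy at most~$c+e$. Step~(B) is false. The state in~$M E_\rB$ before decoding does lie in a space of dimension~$2^{c+e}$, but Bob then applies an isometry into~$B_1 B_\out$ and traces out~$B_1$, and partial trace can raise min-entropy without bound: Bob could simply output~$\id/m$ regardless of input, giving $\Smin(\widetilde\rho^{B_\out})=\log m$ even when $c=e=0$. Since~(B) is supposed to follow from the protocol structure alone (the error constraint is already spent on~(A)), this counterexample kills it. The Berta--Christandl--Touchette converse works for state splitting precisely because Alice must hold the full purification, so there is no~$B_1$ to trace out---and that is exactly the coherence requirement that has been relaxed here.

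The paper's proof of~(iii) is accordingly of a different nature. One first folds Bob's share of the entanglement into the message to obtain an unassisted protocol with message dimension $d=2^{c+e}$ (proof of Corollary~\ref{cor-ent}), and then proves a per-state incompressibility result (Theorem~\ref{thm-compression}): in the Stinespring dilation of any decoding channel $\Psi:\linear(\complex^d)\to\linear(\complex^m)$, every decoded state lies in a \emph{fixed} $d$-dimensional subspace~$\cX$ of the dilated space, and the key Lemma~\ref{lemma-main} shows via Haar concentration that a random rank-$(m/k)$ projection~$\bM_{ij}$ satisfies $\norm{\bM_{ij}\tensor\id}_\cX < 1-\epsilon/2$ with high probability. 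An $\epsilon$-net over $d$-dimensional subspaces (Lemma~\ref{lemma-net-W}) and a union bound over the random bases then produce an ensemble on which every~$\Psi$ with~$d$ too small fails on a constant fraction of inputs. The low-dimensionality you need survives in the dilated space but is destroyed by the partial trace over~$B_1$; working per-state inside the dilation is what replaces the entropy converse, and no averaged-marginal argument of the kind you sketch can substitute for it.
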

In particular, the theorem implies that in the absence of shared entanglement, the ensemble may only be
compressed by a constant number of qubits (independent of~$m$), even if constant average error~$\tfrac{\epsilon}{2} < 1/2$ is allowed. Note also that the straightforward protocol that prepares and sends the state~$\rho_x$ on input~$x$ has sum of entanglement and communication costs equal to~$\log m$. So the lower bound in the theorem is optimal up to an additive universal constant term for 
constant~$\epsilon \in (0,1)$.

Proposition~\ref{prop-ub} and Corollary~\ref{cor-ent} in Section~\ref{sec:Main-Result} 
contain more precise statements of the results stated in the theorem. As we explain in
that section, $\rI(A:B)_{\tau}$ may be interpreted as the ``information content'' of the ensemble; 
it is the \emph{quantum information cost\/}~\cite{Touchette15-QIC} of the
protocol in which Alice simply prepares the state~$\rho_x$ on input~$x$
and sends the state to Bob.

The compression task we study is a relaxation of oblivious (or
\emph{blind\/})
compression, in which the input to Alice is the state~$\rho_x$, rather
than~$x$. It is also a relaxation of state-splitting (more generally, of
\emph{state re-distribution\/}~\cite{LD09-QSR, DY08-redistribution,
YD09-quantum-coding}), and channel simulation. So the lower bound in
Theorem~\ref{thm-result}(ii) holds for these tasks as well.

The ensemble mentioned in Theorem~\ref{thm-result} is obtained via the
probabilistic method, and is of a form devised by Jain, Radhakrishnan,
and Sen~\cite{JRS03-direct-sum}.  They showed the incompressibility of 
such an ensemble when the decompression operation is unitary (i.e., via
protocols as in Figure~\ref{fig-compression} in which the register~$B_1$
is trivial). We adapt their proof method to protocols which allow a general 
quantum channel for decompression. A key step here is a technical lemma 
(Lemma~\ref{lemma-main} in Section~\ref{sec:Main-Result}) which allows
us to reason about general quantum channels, and also yields a tighter
lower bound on the sum of communication and entanglement costs.

\subsection{Implications and related work}

Jain \textit{et al.\/}~\cite{JRS05-compression, JRS08-direct-sum},
also used the same kind of ensemble as in Theorem~\ref{thm-result}
to design a two-party one-way communication 
protocol \emph{with shared entanglement\/} for the Equality function. They
showed that the initial shared state in the protocol cannot be replaced by
one with polynomially smaller dimension in a ``black-box fashion'' (i.e.,
when the local operations of the two parties are not modified).
Theorem~\ref{thm-result} implies a similar impossibility result for
protocols in which the sender and receiver can deviate from the original
protocol arbitrarily, but they try to approximate the receiver's state 
in the original protocol after the message is sent. The impossibility 
holds even when the dimension of the initial shared entangled state is 
reduced only by a constant factor.

A remarkable property of the ensemble posited by
Theorem~\ref{thm-result} is that the communication cost of compression
(with shared entanglement) may be arbitrarily smaller than the entanglement
cost. For constant error the communication cost is within an additive
constant of the quantum information cost~\cite{Touchette15-QIC} of the
protocol that simply prepares and sends the state.  As a consequence, we
infer that the quantum information cost of a protocol may be arbitrarily
smaller than the communication cost of any protocol \emph{without
shared entanglement\/} for compressing its messages.  Anshu, Touchette, Yao,
and Yu~\cite{ATYY17-separation} had previously proven a similar
separation when the compression protocol \emph{is\/} allowed to use
shared entanglement. However, their separation is exponential: they exhibited
an interactive protocol for a Boolean function with quantum information
cost that is exponentially smaller than the communication cost of any
interactive quantum protocol that computes the function. (Observe that a
protocol for compressing the final state of the original protocol may
also be used to compute the function.) In contrast to that protocol, the
one we present \emph{is\/} compressible to its quantum information cost,
but requires an arbitrarily larger amount of shared entanglement to do so.

In another related work, Liu, Perry, Zhu, Koh, and 
Aaronson~\cite{LPZKA16-separation} show that one-way protocols
cannot be compressed to their quantum information cost without using
shared entanglement. They consider a certain one-way protocol in
which Alice gets an~$n$-bit input, Bob gets an~$m$-bit input, with~$m
\in \order(n)$. The protocol has quantum information 
cost~$\Order(n m^{-2} \log m)$. They show that the protocol cannot be 
compressed by a one-way protocol without shared entanglement into a
message of length~$\order(\log n)$ with error at
most~$(n + 1)^{- m}$. Thus the separation is limited, and only holds 
for exponentially small error (in the length of the inputs).

It is believed that the communication in any 
interactive quantum protocol which has a constant number of rounds 
and computes a function of classical inputs may be compressed,
with constant error, to an amount proportional to the quantum 
information cost of the protocol. For one-way protocols such a result 
was shown by Jain, Radhakrishnan, and 
Sen~\cite{JRS05-compression,JRS08-direct-sum}. This was later re-proven by
Anshu, Jain, Mukhopadhyay, Shayeghi, and Yao~\cite{AJMSY16-one-shot} using
different techniques. A similar result for protocols with a larger constant
number of rounds of communication was claimed by 
Touchette~\cite{Touchette15-QIC}, but the proof has an error.
The compression protocols achieving quantum information cost all rely on the
presence of shared entanglement. Theorem~\ref{thm-result} shows that
even for the simplest protocols, such compression is not possible in 
the absence of shared entanglement. Moreover, it shows that the entanglement 
cost may be necessarily within an additive constant of the length of 
the message to be compressed, even when the quantum information cost 
is arbitrarily smaller than the message length.

In a recent independent work, Khanian and
Winter~\cite{KW19-pure-state-compression} analyse the communication and
entanglement costs of a variant of compression in the asymptotic
setting. They study pure state ensembles with quantum side information
in the form of pure states. In the case of visible compression with
shared entanglement, they show that the asymptotic (per-instance)
communication cost is at least~$\tfrac{1}{2} \, \rS(\rho)$, i.e., half
the entropy of the ensemble average state~$\rho$. So this cost may be at
most a factor of~$1/2$ smaller than that of compression \emph{without\/}
shared entanglement.  Moreover, the asymptotic sum of communication and
entanglement costs is at least the entropy~$\rS(\rho)$. Thus the kind of
separation we show does not hold for pure states even in the asymptotic
setting.

\paragraph{Organization.} The rest of this article is organized as follows.
In Section~\ref{sec:Preliminaries}, we review basic concepts and notation
from quantum information and communication.
In section~\ref{sec:Main-Result}, we prove the main result and discuss
its implications.

\paragraph{Acknowledgements.}
We thank Mil{\'a}n Mosonyi for extensive, thoughtful feedback on 
earlier versions of this article. This research is supported in part by NSERC Canada. SBH is also supported by an Ontario Graduate Scholarship.

\section{Preliminaries}
\label{sec:Preliminaries}

\subsection{Mathematical notation and background}

We refer the reader to the book Watrous~\cite{W18-TQI} for a thorough
introduction to basics of quantum information. We briefly review the
notation and some results that we use in the article. 

For the sake of brevity, we denote the set~$\{1,2,\ldots,k\}$ by~$[k]$.
We denote physical quantum systems (``registers'') with capital letters,
like~$X$,~$Y$ and~$Z$. The state space corresponding to a register is a 
finite-dimensional Hilbert space.
We denote (finite dimensional) Hilbert spaces either by capital script 
letters like~$\cH$ and $\cK$, or as~$\complex^m$ where~$m$ is the dimension. 
We denote the the dimension of a Hilbert space corresponding to a register~$X$ as~$|X|$.
We use the Dirac notation, i.e., ``ket'' and ``bra'', for unit vectors and
their adjoints, respectively. We denote 
the set of all unit vectors in a Hilbert space~$\cH$ by~$\sphere(\cH)$.
For a Hilbert space~$\cH \eqdef \complex^S$ for some non-empty
finite set~$S$, we call~$\set{ \ket{x} : x \in S}$ its \emph{canonical\/} basis. 

A subset~$N$ of~$\sphere(\cH)$ is called~$\epsilon$-\emph{dense\/} if for 
every vector~$\ket{u} \in \sphere(\cH)$, there exists a vector in the 
set~$N$ at Euclidean distance at most~$\epsilon$ from~$\ket{v}$. Such
a set is also called an ``$\epsilon$-net'' in the literature.
The following proposition states that every finite dimensional Hilbert 
space has a relatively small~$\epsilon$-dense set.

\begin{proposition}[\cite{Matousek02-discrete-geometry}, Lemma~13.1.1,
Chapter~13]
\label{Fact-epsilon-net}
Let~$\epsilon\in(0,1]$, and
$m$ be a positive integer. The Hilbert space~$\complex^{m}$ has 
an~$\epsilon$-dense
set~$N$ of size~$\abs{N}\leq\left(\frac{4}{\epsilon}\right)^{2m}$.
\end{proposition}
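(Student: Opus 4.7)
The plan is to carry out the standard volume-packing argument for covering numbers, adapted to the complex unit sphere. First I would identify $\sphere(\complex^{m})$ with the real unit sphere in $\reals^{2m}$ via the real/imaginary decomposition of coordinates, so that Euclidean distance on $\complex^{m}$ corresponds exactly to the standard Euclidean distance on $\reals^{2m}$. This lets me move the problem into real Euclidean space of dimension $2m$, where I can use Lebesgue volume.

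Next, I would take $N$ to be a maximal $\epsilon$\emph{-separated} subset of $\sphere(\complex^{m})$, i.e., a subset in which any two distinct points are at Euclidean distance strictly greater than $\epsilon$, and which is not properly contained in any larger such subset (such a set exists by a straightforward greedy/Zorn argument). Maximality immediately gives the density property: for every $\ket{u} \in \sphere(\complex^{m})$, adjoining $\ket{u}$ must violate $\epsilon$-separation, so some element of $N$ lies within distance $\epsilon$ of $\ket{u}$; hence $N$ is $\epsilon$-dense.

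To bound $|N|$, I would use the disjoint-ball packing trick. The open Euclidean balls of radius $\epsilon/2$ centered at the points of $N$ are pairwise disjoint by separation, and all lie inside the ball of radius $1+\epsilon/2$ about the origin in $\reals^{2m}$. Since the volume of a Euclidean ball of radius $r$ in $\reals^{2m}$ scales as $r^{2m}$, comparing total volumes yields
\[
|N| \;\leq\; \frac{(1+\epsilon/2)^{2m}}{(\epsilon/2)^{2m}} \;=\; \left(\frac{2}{\epsilon}+1\right)^{2m} \;\leq\; \left(\frac{4}{\epsilon}\right)^{2m},
\]
where the last inequality uses $\epsilon \in (0,1]$ so that $2/\epsilon + 1 \leq 4/\epsilon$. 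In particular, $N$ is automatically finite, which justifies the greedy construction a posteriori.

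There is no substantive obstacle here; the only points to be careful about are keeping track of the correct real dimension (so that the exponent is $2m$ and not $m$), and confirming the final numerical inequality on the range $\epsilon \in (0,1]$ to obtain the constant $4$ in the bound.
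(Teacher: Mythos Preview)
Your argument is correct and is the standard volume-packing proof; note that the paper does not supply its own proof of this proposition but simply cites it from Matou\v{s}ek. In fact, your intermediate bound $(2/\epsilon+1)^{2m}$ is exactly the sharper estimate the paper attributes to Milman and Schechtman in the remark immediately following the proposition, before you weaken it to $(4/\epsilon)^{2m}$.
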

A slightly better bound~$\left(1+\tfrac{2}{\epsilon}\right)^{2m}$ on
the size of an~$\epsilon$-dense
set is given in Ref.~\cite[Lemma 2.6]{MS86-asymp-thy-normed-spaces}. 

We denote the set of all linear operators on Hilbert space~$\cH$
by~$\linear(\cH)$, the set of all positive semi-definite operators
by~$\Pos(\cH)$, the set of all unitary operators by~$\unitary(\cH)$, and
the set of all quantum states (or ``density operators'') over~$\cH$
by~$\qstate(\cH)$. The identity operator on~$\cH$ is denoted
by~$\id_\cH$.  We denote quantum states or sub-normalized states (positive
semi-definite operators with trace at most~$1$) by 
lowercase Greek letters like~$\rho, \sigma$. We use notation such 
as~$\rho^X$ to indicate that register~$X$ is in state~$\rho$, and 
may omit the superscript when the register is clear from the context.
An operator~$M \in \Pos(\cH)$ is called a \emph{measurement operator\/}
if~$M \leq \id$. We usually denote quantum channels, i.e., completely
positive trace-preserving linear maps from the space of linear operators
on a Hilbert space to another such space, by capital Greek letters
like~$\Psi$. The \emph{partial trace\/} over a Hilbert space~$\cK$
is denoted as~$\trace_{\cK}$.

We denote the operator norm (\textit{Schatten~$\infty$ norm\/}) of an
operator~$M \in \linear(\cH)$ by~$\norm{M}$, the Frobenius norm
(\textit{Schatten~$2$ norm\/}) by~$\norm{M}_\rF$, and the trace norm
(\textit{Schatten~$1$ norm}) by~$\trnorm{M}$. Recall that~$\trnorm{M}
\coloneqq \trace \sqrt{M^* M}$ is the sum of the singular values of~$M$,
$\norm{M}$ is the largest singular value, and~$\norm{M}_\rF \coloneqq
\sqrt{ \trace (M^* M) }$ is
the~$\ell_2$-norm of the singular values with multiplicity.
All of these norms are invariant under
composition with a unitary operator. 

We consider random unitary operators chosen according to the
\emph{Haar measure\/}~$\eta$ on~$\unitary(\cH)$, where~$\cH$ is a 
finite dimensional Hilbert space. The
Haar measure is the unique unitarily invariant probability 
measure over~$\unitary(\cH)$. 

Let~$f:\unitary(\cH)\rightarrow \reals$ be a continuous function. 
Suppose~$f$ is $\kappa$-Lipschitz, i.e., for all~$U,V \in \unitary(\cH)$,
we have
\[
\size{ f(U) - f(V) } \quad \le \quad \kappa \norm{ U - V}_\rF \enspace,
\]
for some~$\kappa \ge 0$. If~$\kappa$ is small enough as compared to the
dimension of~$\cH$, with high probability, the random variable~$f(\bU)$
is close to its expectation, where~$\bU \in \unitary(\cH)$ is a
Haar-random unitary operator.  This \emph{concentration of measure\/}
property is formalized by the following theorem, which is a special case
of Theorem~5.17 in Ref.~\cite{M19-random-matrix-theory}.
\begin{theorem}[\cite{M19-random-matrix-theory}, Theorem~5.17, page~159]
\label{lemma-concentration} 
Let~$\eta$ be the Haar measure on~$\unitary(\cH)$, where~$\cH$ is a
Hilbert space with finite dimension~$m$, and let~$\bU \in \unitary(\cH)$ 
be a random unitary operator chosen according to~$\eta$.
For every function~$f : \unitary(\cH)\rightarrow\reals$ that
is~$\kappa$-Lipschitz with respect to the Frobenius norm (with~$\kappa > 0$),
and every positive real number~$t$, we have
\[
\eta \Big( \{ U\in\unitary(\cH) : f(U)-\expct \left[f(\bU)\right]\geq t\}
        \Big)
    \quad \leq \quad \exp \!\left( - \frac{ (m-2) t^{2} }{ 24 \kappa^2 }
        \right) \enspace.
\]
\end{theorem}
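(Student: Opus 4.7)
The plan is to prove this concentration inequality via the standard route from geometric functional analysis on compact Lie groups: establish a lower bound on the Ricci curvature of $\unitary(\cH)$, pass to a logarithmic Sobolev inequality via the Bakry--{\'E}mery criterion, and then apply Herbst's argument to obtain Gaussian concentration for Lipschitz functions. The only piece of the statement that is not intrinsic to the Riemannian structure is the use of the Frobenius norm; this I would handle as a preliminary reduction.

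First I would equip $\unitary(\cH)$ with the bi-invariant Riemannian metric obtained by left-translating the Hilbert--Schmidt inner product $\langle X, Y \rangle \coloneqq \trace(X^{*} Y)$ on the Lie algebra $\mathfrak{u}(\cH)$ of skew-Hermitian operators, and let $d$ denote the resulting geodesic distance. By comparing the straight-line chord in the ambient real Hilbert space of $m \times m$ matrices with the arc length of a connecting one-parameter subgroup, one obtains $\norm{U - V}_\rF \leq d(U, V)$ for all $U, V \in \unitary(\cH)$. Hence any $f$ that is $\kappa$-Lipschitz with respect to $\norm{\cdot}_\rF$ is also $\kappa$-Lipschitz with respect to $d$, and it suffices to prove the concentration statement on the Riemannian manifold $(\unitary(\cH), d)$ with the normalized Haar measure.

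Next I would establish a uniform lower bound $\mathrm{Ric} \geq \rho\, g$. For a bi-invariant metric on a compact Lie group the Ricci tensor at the identity is $\mathrm{Ric}(X, X) = -\tfrac{1}{4} \trace(\mathrm{ad}_X^2)$, and a direct Casimir computation on $\mathfrak{su}(m)$ yields a positive lower bound proportional to $m$; because $\mathfrak{u}(\cH) = \mathfrak{su}(\cH) \oplus \mathfrak{u}(1)$ splits off an abelian summand carrying no curvature, the effective bound degrades to $\rho$ proportional to $m - 2$. The Bakry--{\'E}mery criterion then implies that the Haar measure on $\unitary(\cH)$ satisfies a logarithmic Sobolev inequality with constant $c$ of order $1/(m-2)$.

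Finally I would invoke Herbst's argument. Substituting $\e^{\lambda f / 2}$ for a $1$-Lipschitz function $f$ into the logarithmic Sobolev inequality gives an ordinary differential inequality in $\lambda$ for $\Lambda(\lambda) \coloneqq \log \expct\!\left[ \e^{\lambda(f(\bU) - \expct f(\bU))} \right]$ whose solution satisfies $\Lambda(\lambda) \leq c\lambda^2 / 2$; Markov's inequality applied to $\e^{\lambda f}$ with the optimal choice of $\lambda$ then yields the Gaussian tail $\eta(\{ U : f(U) - \expct f(\bU) \geq t \}) \leq \e^{-t^2 / (2c)}$. Rescaling by $\kappa$ and combining with the Frobenius--geodesic comparison from the first step recovers the stated bound, with the universal constant $24$ absorbing all numerical factors from the Ricci computation and Herbst's integration. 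The main obstacle is tracking these constants sharply enough to land on $(m-2)/(24 \kappa^2)$: the Ricci lower bound on $\unitary(m)$ is the quantitatively delicate step, since the abelian $\mathfrak{u}(1)$ direction must be isolated and is the source of the $m-2$ rather than $m$ in the exponent.
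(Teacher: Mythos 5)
The paper never proves this statement: it is imported verbatim (as a special case) from Meckes's book, Theorem~5.17, so your attempt can only be compared with the proof in that reference---which does follow exactly the route you outline (chord-versus-arc reduction from the Frobenius norm to geodesic distance, a curvature-based log-Sobolev inequality, then Herbst). Your first and last steps are sound: $\norm{U-V}_\rF \le d(U,V)$ gives the correct direction of the Lipschitz reduction, and the Herbst argument is standard.

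The gap is in the curvature step, and it is precisely the delicate point for the unitary group. On $\mathfrak{su}(m)$ the Casimir computation indeed gives $\mathrm{Ric}=(m/2)\,g$, but on the central summand $\mathfrak{u}(1)\subset\mathfrak{u}(m)$ one has $\mathrm{ad}_X=0$, hence $\mathrm{Ric}(X,X)=-\tfrac14\trace(\mathrm{ad}_X^2)=0$ exactly. The infimum of the Ricci form over unit tangent vectors to $\unitary(\cH)$ is therefore $0$, not ``proportional to $m-2$'': there is no positive uniform lower bound, and the Bakry--{\'E}mery criterion yields no logarithmic Sobolev inequality for $\unitary(\cH)$ at all. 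This cannot be patched by saying the abelian direction ``degrades'' the constant; the flat direction is a genuine obstruction (it is exactly why Meckes treats $\su(m)$ by Bakry--{\'E}mery but needs a separate argument for the full unitary group). The repair in the reference is a covering/tensorization trick: write $U=\e^{\complexi\theta}V$ with $V$ Haar-distributed on $\su(m)$ and $\theta$ uniform on a fundamental domain $[0,2\pi/m)$ of the center modulo the $m$-th roots of unity already contained in $\su(m)$. The pullback of a $\kappa$-Lipschitz function is $\kappa$-Lipschitz in $V$ but $\sqrt{m}\,\kappa$-Lipschitz in $\theta$; however, the interval of length $2\pi/m$ has log-Sobolev constant of order $1/m^2$, which exactly absorbs the $\sqrt{m}$ blow-up, so tensorizing with the $\su(m)$ inequality (constant of order $1/m$ from Bakry--{\'E}mery) gives an effective log-Sobolev constant of order $1/m$ for $\unitary(\cH)$, and Herbst then produces the Gaussian tail. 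Finally, the $m-2$ in the exponent is also misattributed in your outline: it does not come from the $\mathfrak{u}(1)$ direction but from the reference stating one uniform bound for all classical groups, the bottleneck being $\mathrm{SO}(m)$, whose Ricci lower bound is $(m-2)/4$.
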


The \emph{fidelity} between two sub-normalized states~$\rho$ and~$\sigma$ is
defined as
\[
\rF(\rho,\sigma) \quad \coloneqq \quad
    \trace \sqrt{\sqrt{\rho}\ \sigma\sqrt{\rho}}
    + \sqrt{(1 - \trace(\rho)) (1 - \trace(\sigma)) } \enspace.
\]
Fidelity can be used to define a useful metric called the \emph{purified 
distance\/}~\cite{GLN05-dist-meas,TCR10-duality} between sub-normalized states:
\[
\rP(\rho,\sigma)\quad\coloneqq\quad\sqrt{1-\rF(\rho,\sigma)^{2}}\enspace.
\]
For a quantum state~$\rho \in \qstate(\cH)$ and~$\epsilon\in[0,1]$, we
define
\[
\sB^{\epsilon}(\rho) \quad \coloneqq \quad
    \set{ \widetilde{\rho} \in \Pos(\cH) : \rP(\rho, \widetilde{\rho}) 
    \leq \epsilon, ~ \trace{\:\widetilde{\rho}} \leq 1 }
\]
as the ball of sub-normalized states that are within purified 
distance~$\epsilon$ of~$\rho$.

The trace distance between quantum states is induced by the trace 
norm.  The following property is well known (see, e.g., 
Ref.~\cite[Theorem~3.4, page~128]{W18-TQI}).
\begin{proposition}[Holevo-Helstrom Theorem \cite{Hel67-detection,Hol72-decision}]
\label{prop-trnorm}
For any pair of quantum states~$\rho, \sigma \in \qstate(\cH)$,
\[
\trnorm{\rho - \sigma} \quad = \quad 2 \; \max \;
    \set{ \; \abs{ \trace(M \rho) - \trace(M \sigma) } :  M \textrm{ is a
    measurement operator on } \cH } \enspace.
\]
\end{proposition}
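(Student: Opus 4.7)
The plan is to use the Jordan (Hahn) decomposition of the Hermitian operator $\rho - \sigma$ to reduce the optimization over measurement operators to a trace computation. Since $\rho - \sigma$ is Hermitian, it admits a spectral decomposition which we group by the sign of the eigenvalues, writing $\rho - \sigma = P - Q$ with $P, Q \in \Pos(\cH)$ and $PQ = 0$ (so $P$ and $Q$ have orthogonal supports). The identity $\trnorm{\rho - \sigma} = \trace P + \trace Q$ follows immediately. Moreover, because $\trace \rho = \trace \sigma = 1$, we have $\trace P = \trace Q$, so $\trnorm{\rho - \sigma} = 2\,\trace P = 2\,\trace Q$.

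For the upper bound, I would fix any measurement operator $M$, i.e., any $M \in \Pos(\cH)$ with $M \le \id$, and compute
\[
\trace(M\rho) - \trace(M\sigma) \quad = \quad \trace(MP) - \trace(MQ) \enspace.
\]
Since $M \ge 0$ and $P, Q \ge 0$, both $\trace(MP)$ and $\trace(MQ)$ are nonnegative, so the absolute value of this difference is at most $\max\{\trace(MP), \trace(MQ)\}$. Using $M \le \id$ together with positivity of $P$ and $Q$ gives $\trace(MP) \le \trace P$ and $\trace(MQ) \le \trace Q$, and the earlier equality $\trace P = \trace Q = \tfrac{1}{2}\trnorm{\rho - \sigma}$ finishes the inequality.

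For the matching lower bound, I would exhibit an explicit optimal $M$, namely the orthogonal projector $\Pi$ onto the support of $P$ (equivalently, onto the span of the eigenvectors of $\rho - \sigma$ with positive eigenvalue). This is a legitimate measurement operator. Because $\Pi P = P$ and $\Pi Q = 0$ (as $P$ and $Q$ have orthogonal supports), one gets
\[
\trace(\Pi \rho) - \trace(\Pi \sigma) \quad = \quad \trace(\Pi P) - \trace(\Pi Q) \quad = \quad \trace P \quad = \quad \tfrac{1}{2}\,\trnorm{\rho - \sigma} \enspace,
\]
which saturates the upper bound and establishes the claimed equality.

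There is no real obstacle in this argument; the only thing to be careful about is the book-keeping in the Jordan decomposition (in particular, the orthogonality of the supports of $P$ and $Q$, which is what makes the projector $\Pi$ simultaneously satisfy $\Pi P = P$ and $\Pi Q = 0$). The maximum is attained, so writing $\max$ rather than $\sup$ in the statement is justified by the explicit optimizer $\Pi$.
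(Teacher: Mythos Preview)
Your proof is correct and is the standard argument for the Holevo--Helstrom theorem. The paper does not supply its own proof of this proposition; it is stated as a known result with a citation to the original references and to Watrous's textbook~\cite[Theorem~3.4]{W18-TQI}, so there is nothing to compare against beyond noting that your argument matches the textbook treatment.
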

Purified distance and trace distance are related to each other as
follows (see, e.g., Ref.~\cite[Theorem~3.33, page~161]{W18-TQI}):
\begin{proposition}[Fuchs and van de Graaf Inequalities \cite{FG99-dist-meas}]
\label{prop-FvdG}
For any pair of quantum states~$\rho, \sigma \in \qstate(\cH)$,
\[
1 - \sqrt{1- \rP(\rho,\sigma)^2}
    \quad \leq \quad \frac{1}{2} \trnorm{\rho - \sigma}
    \quad \leq \quad \rP(\rho, \sigma) \enspace.
\]
\end{proposition}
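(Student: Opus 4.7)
The plan is to prove the two inequalities separately. First, for normalized $\rho, \sigma \in \qstate(\cH)$, observe that $1 - \sqrt{1 - \rP(\rho,\sigma)^2} = 1 - \rF(\rho,\sigma)$, so the claim reduces to the standard Fuchs--van de Graaf inequalities
\[
1 - \rF(\rho,\sigma) \;\leq\; \tfrac{1}{2}\trnorm{\rho - \sigma} \;\leq\; \sqrt{1 - \rF(\rho,\sigma)^2} \enspace.
\]
I would attack the upper bound by purification, and the lower bound by reducing to a binary measurement outcome via Proposition~\ref{prop-trnorm}.

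For the upper bound, I would first settle the pure-state case by direct calculation: on the (at most) two-dimensional subspace spanned by $\ket{\psi}, \ket{\phi} \in \sphere(\cH)$, the operator $\ketbra{\psi}{\psi} - \ketbra{\phi}{\phi}$ is traceless Hermitian, with eigenvalues $\pm\sqrt{1 - \abs{\braket{\psi}{\phi}}^2}$, so $\tfrac{1}{2}\trnorm{\ketbra{\psi}{\psi} - \ketbra{\phi}{\phi}} = \sqrt{1 - \abs{\braket{\psi}{\phi}}^2}$. Then, invoking Uhlmann's theorem, I would pick purifications $\ket{\psi}, \ket{\phi}$ of $\rho, \sigma$ on an enlarged space such that $\abs{\braket{\psi}{\phi}} = \rF(\rho,\sigma)$, and use contractivity of trace distance under the partial trace (a quantum channel) to conclude
\[
\tfrac{1}{2}\trnorm{\rho - \sigma} \;\leq\; \tfrac{1}{2}\trnorm{\ketbra{\psi}{\psi} - \ketbra{\phi}{\phi}} \;=\; \sqrt{1 - \rF(\rho,\sigma)^2} \enspace.
\]

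For the lower bound, let $M$ be a measurement operator attaining the maximum in Proposition~\ref{prop-trnorm}, and set $p \coloneqq \trace(M \rho)$, $q \coloneqq \trace(M \sigma)$; assume without loss of generality $p \geq q$, so $\tfrac{1}{2}\trnorm{\rho - \sigma} = p - q$. The two-outcome measurement $\{M, \id - M\}$ is a quantum channel sending $\rho, \sigma$ to classical states with distributions $(p, 1-p)$ and $(q, 1-q)$, and by monotonicity of fidelity under quantum channels, $\rF(\rho,\sigma) \leq \sqrt{pq} + \sqrt{(1-p)(1-q)}$. Using the algebraic identity
\[
\bigl[\sqrt{pq} + \sqrt{(1-p)(1-q)}\bigr]^2 + \bigl[\sqrt{p(1-q)} - \sqrt{q(1-p)}\bigr]^2 \;=\; 1
\]
together with the elementary bound $1 - \sqrt{1 - t} \leq t$ for $t \in [0, 1]$, I would obtain $1 - \rF(\rho,\sigma) \leq (\sqrt{p(1-q)} - \sqrt{q(1-p)})^2$. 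Finally, the factorization $p - q = p(1-q) - q(1-p) = (\sqrt{p(1-q)} - \sqrt{q(1-p)})(\sqrt{p(1-q)} + \sqrt{q(1-p)})$ combined with $a + b \geq \abs{a - b}$ for $a, b \geq 0$ yields $(\sqrt{p(1-q)} - \sqrt{q(1-p)})^2 \leq p - q$, completing the argument.

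The proof relies on three classical tools not explicitly restated in the excerpt---Uhlmann's theorem, contractivity of the trace norm under quantum channels, and monotonicity of fidelity under quantum channels---all available in Watrous~\cite{W18-TQI}. The hardest step to see is the algebraic identity relating $\sqrt{pq} + \sqrt{(1-p)(1-q)}$ and $\sqrt{1 - (\sqrt{p(1-q)} - \sqrt{q(1-p)})^2}$; this is routine to verify by squaring and simplifying, since the cross terms $\pm 2\sqrt{pq(1-p)(1-q)}$ cancel and $pq + (1-p)(1-q) + p(1-q) + q(1-p) = 1$.
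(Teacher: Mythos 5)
The paper itself offers no proof of this proposition---it is quoted as a known result (Fuchs--van de Graaf, via Watrous, Theorem~3.33)---so your attempt stands on its own. Your reduction of the stated form to $1-\rF \leq \tfrac{1}{2}\trnorm{\rho-\sigma} \leq \sqrt{1-\rF^2}$ is right, and your proof of the upper bound (pure-state computation, Uhlmann's theorem, contractivity of the trace norm under partial trace) is the standard, correct argument. The lower bound, however, contains a genuine error: monotonicity of fidelity runs in the wrong direction for what you need. Writing $t \coloneqq \left(\sqrt{p(1-q)}-\sqrt{q(1-p)}\right)^2$, your algebraic identity says $\sqrt{pq}+\sqrt{(1-p)(1-q)} = \sqrt{1-t}$, so monotonicity gives $\rF(\rho,\sigma) \leq \sqrt{1-t}$, i.e.\ $1-\rF(\rho,\sigma) \geq 1-\sqrt{1-t}$. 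That is a \emph{lower} bound on $1-\rF$; combining it with $1-\sqrt{1-t}\leq t$ yields nothing (from $A \geq B$ and $B \leq C$ one cannot conclude $A \leq C$). To bound $1-\rF$ from above you need a \emph{lower} bound on $\rF$ in terms of the measurement statistics, which monotonicity cannot provide.

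Worse, your intermediate claim $1-\rF(\rho,\sigma) \leq t$ is actually false, not merely unproven. Take the commuting states $\rho = \diag(1/2,\,1/2,\,0)$ and $\sigma = \diag(1/5,\,1/2,\,3/10)$. A valid maximizer in Proposition~\ref{prop-trnorm} is the projector onto the first coordinate, giving $p = 1/2$, $q = 1/5$, and
\[
t \;=\; \left(\sqrt{\tfrac{2}{5}}-\sqrt{\tfrac{1}{10}}\right)^2 \;=\; \tfrac{1}{10}\enspace,
\qquad\text{while}\qquad
1-\rF(\rho,\sigma) \;=\; \tfrac{1}{2}-\sqrt{\tfrac{1}{10}} \;\approx\; 0.184 \;>\; t\enspace,
\]
even though the proposition itself holds here ($1-\rF \approx 0.184 \leq p-q = 0.3$). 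The standard repair replaces monotonicity by the Fuchs--Caves achievability statement: fidelity equals the \emph{minimum} over POVMs of the classical fidelity of the outcome distributions, so there is a measurement $\set{E_i}$ with $p_i = \trace(E_i\rho)$, $q_i = \trace(E_i\sigma)$ satisfying $\rF(\rho,\sigma) = \sum_i \sqrt{p_i q_i}$. For that measurement,
\[
1-\rF(\rho,\sigma) \;=\; \tfrac{1}{2}\sum_i \left(\sqrt{p_i}-\sqrt{q_i}\right)^2 \;\leq\; \tfrac{1}{2}\sum_i \abs{p_i-q_i} \;\leq\; \tfrac{1}{2}\trnorm{\rho-\sigma}\enspace,
\]
the last step being monotonicity of trace distance under the measurement channel. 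Note this optimal measurement is in general not binary; restricting to the two-outcome Helstrom measurement, as you do, discards exactly the information your argument would need.
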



Unless specified, we take the base of the logarithm function to be~$2$.

Let~$\cH, \cK$, and~$\cM$ be the state spaces corresponding to 
registers~$X, Y$, and~$M$, respectively. For a register~$X$ in
quantum state~$\rho \in \qstate(\cH)$, the
\emph{von Neumann entropy\/} of~$X$ is defined as
\[
\rS(\rho) \quad \coloneqq \quad -\trace \left( \rho \log \rho \right) \enspace.
\]
This coincides with the Shannon entropy of the spectrum of~$\rho$.
The \emph{relative entropy\/} of two quantum states $\rho, \sigma
\in \qstate(\cH)$ is defined as
\[
\rS(\rho \| \sigma) \quad \coloneqq \quad \trace \left( \rho \log \rho - \rho 
    \log \sigma \right) \enspace,
\]
when~$\support(\rho) \subseteq \support(\sigma)$, and is~$\infty$
otherwise. The
\emph{max-relative entropy\/} of~$\rho$ with respect to~$\sigma$ is 
defined as
\[
\Smax( \rho \| \sigma) \quad \coloneqq \quad
    \min \{ \lambda : \rho \leq 2^{\lambda} \sigma \} \enspace,
\]
when~$\support(\rho) \subseteq \support(\sigma)$, and is~$\infty$
otherwise. The \emph{min-entropy\/} of~$\rho$ is defined as
\[
\Smin( \rho) \quad \coloneqq \quad - \log \norm{ \rho} \enspace.
\]

Suppose that the registers~$X,Y$ are in joint
state~$\rho^{XY} \in \qstate(\cH \otimes \cK)$.
The \emph{mutual information\/} of~$X$ and~$Y$ is defined as
\[
\rI(X:Y)_{\rho} \quad \coloneqq \quad \rS \!\left( \rho^{XY} \|
    \rho^{X} \tensor \rho^{Y} \right) \enspace.
\]
When the state is clear from the context, the subscript~$\rho$ may be
omitted from the notation. When~$\rho$ is a classical-quantum state,
i.e.,~$\rho^{XY} = \sum_{x} p_x \density{x}^X \tensor
\rho_x^Y$ with~$p$ being a probability distribution, $\set{ \ket{x}}$ 
the canonical orthonormal basis for~$\cH$, and~$\rho_x\in\qstate(\cK)$,
we have
\[
\rI(X:Y) \quad = \quad \sum_x p_x \; \rS(\rho_x \| \rho)\enspace,
\]
where~$\rho=\sum_x p_x \rho_x$.
Suppose the registers~$X,Y,M$ are in joint (tripartite) state~$\rho^{XY
\! M}
\in \qstate(\cH \tensor \cK \tensor \cM)$. The \emph{conditional
mutual information\/} of~$X$ and~$M$ given~$Y$ is defined as
\[
\rI(X:M \,|\, Y) \quad \coloneqq \quad \rI(XY:M) - \rI(Y:M) \enspace.
\]
When~$\rho^{XY \! M}$ is a tensor product of the states~$\rho^{X \! M}$
and~$\rho^Y$, we have
\[
\rI(X:M \,|\, Y) \quad = \quad \rI(XY:M) \quad = \quad \rI(X:M) \enspace.
\]

For any state~$\rho^{XY} \in \qstate(\cH \otimes \cK)$, the
\textit{max-information} register~$Y$ has about
register~$X$~\cite{BCR11-reverse-Shannon} is defined as
\[
\Imax(X:Y)_\rho \quad \coloneqq \quad
    \min_{\sigma \in \qstate(\cK)} \Smax \!\left( \rho^{XY} \,\|\,
    \rho^{X} \otimes \sigma^Y \right) \enspace.
\]
For a parameter~$\epsilon \in [0,1]$, the \emph{smooth max-information\/} 
register~$Y$ has about register~$X$ is defined as
\[
\Imax^{\epsilon}(X:Y)_{\rho} \quad \coloneqq \quad
    \min_{\widetilde{\rho} \in \sB^{\epsilon}(\rho)}
    \Imax(X:Y)_{\widetilde{\rho}} \enspace.
\]

\subsection{Quantum communication protocols}
\label{sec-qcp}

We first describe a two-party quantum communication protocol informally
and then give a formal definition for the special case of interest to us. 
We refer the reader to, e.g., Ref.~\cite{Touchette15-QIC} for a formal
definition of the general case.

In a two-party quantum communication protocol, there are two parties,
Alice and Bob, each of whom may get some input in registers designated
for this purpose. Alice and Bob's inputs may be entangled with each
other, and also with a ``reference'' system, which purifies it. 
Alice and Bob's goal
is to accomplish an information processing task by communicating with
each other.

Each party possesses some ``work'' (or ``private'') qubits (or
registers) in addition to the input registers. The work qubits are
initialized to a fixed pure state in tensor product with the input
state.  This fixed state may be entangled across the work registers of Alice
and Bob, and may be used as a computational resource. In this case, we
say the protocol or the channel is \emph{with shared entanglement\/} or
with \emph{entanglement assistance\/}.  If the fixed state is a tensor
product state across Alice and Bob's registers, we say it is a protocol
or channel \emph{without shared entanglement\/} or simply 
\emph{unassisted\/}.

The protocol proceeds in some number of ``rounds''.
In each round, the sender applies an isometry to the qubits in her
possession, and sends a sub-register (the message) to the other party.
The length of the message (in qubits) is the base~$2$ logarithm of the
dimension of the message register.
After the last round, the recipient of the last message applies an
isometry to his registers. The output of the protocol is the
state of a pair of designated registers of the two parties at the end.

We are often interested in minimizing the total length of the messages
over all the rounds, i.e., the \emph{communication cost\/} (or
\emph{complexity\/}) of the protocol.
The idea is to accomplish the task at hand with minimum communication.
In protocols with shared entanglement, we are also interested in the
amount of shared entanglement needed in the protocol, i.e., the minimum
dimension of the support of the \emph{initial\/} state of either 
party's work space. This latter quantity, measured in number of qubits, 
is called the \emph{entanglement cost\/} of the protocol.

In this article, we study only \emph{one-way\/} protocols, 
i.e., protocols with one round, and therefore one message, (say) from Alice 
to Bob. We describe these more formally here.
Alice and Bob initially hold registers~$A_\rin E_\rA$ and~$B_\rin
E_\rB$, respectively. The input registers~$A_\rin B_\rin$ are
initialized to some state~$\rho^{A_\rin B_\rin}$ whose purification is
held in register~$R$ with a third party, the referee. 
Alice and Bob's work registers~$E_{\rA}$ and~$E_{\rB}$ are initialized 
to a pure state~$\ket{\phi}^{E_{\rA} E_{\rB}}$, which may be entangled
across the partition~$E_\rA E_\rB$.
The local operations in the protocol are specified by two 
isometries~$U$ and~$V$.
The isometry~$U$ acts on registers~$A_{\rin} E_{\rA}$ and maps them to
registers~$A_{\out} A_{1} M$. The isometry~$V$ acts on 
registers~$B_{\rin} E_{\rB} M$ and maps them to registers~$B_1 B_{\out}$. 
First, Alice applies~$U$ to the registers~$A_{\rin}$ and~$E_{\rA}$ and 
sends the register~$M$ to Bob. Then, Bob applies~$V$ on his initial 
registers~$B_{\rin} E_{\rB}$ and the message~$M$. The output of the 
protocol is the state of Alice and Bob's registers~$A_\out B_\out$.
The communication cost of this protocol is~$\log |M|$ and the entanglement 
cost is the logarithm of the Schmidt rank of the state~$\ket{\phi}$
across the partition~$E_\rA E_\rB$. We say it is a protocol \emph{with
shared entanglement\/} if the Schmidt rank of~$\ket{\phi}$ is more
than~$1$, and say that it is \emph{without shared entanglement\/} otherwise.
Such protocols are also called \emph{entanglement-assisted\/} and
\emph{unassisted\/}, respectively, in the literature. 

We say that the input is ``classical'' when there are non-empty finite 
sets~$S_\rA, S_\rB$ (the sets of classical inputs) such that the Hilbert 
spaces corresponding to the input registers are~$\complex^{S_\rA}, 
\complex^{S_\rB}$, respectively, and the initial joint quantum state 
in the input registers~$A_\rin B_\rin$ is diagonal in the 
canonical basis~$\set{ \ket{x} \ket{y} : x \in S_\rA, y \in S_\rB}$.
In the case that the inputs to Alice and Bob are classical, we assume 
without loss of generality that the input registers~$A_\rin$
and~$B_\rin$ are ``read-only'', i.e., the isometries~$U$ and~$V$ 
are of the form~$\sum_{x \in S_\rA} \density{x}^{A_\rin} \tensor
U_x^{E_\rA}$ and~$\sum_{y \in S_\rB} \density{y}^{B_\rin} \tensor
V_y^{M E_\rB}$, where~$S_\rA, S_\rB$ are sets as above.
A one-way protocol in which Alice gets a classical input and 
Bob does not have any input is depicted in Figure~\ref{fig-compression}.

Let~$\Pi$ be a one-way quantum protocol (with or without shared entanglement)
with a single message from Alice to Bob, in which Alice gets a 
classical input and Bob does not have any input. The register~$R$ with
the referee purifies Alice's input so that~$\ket{\rho}^{R A_\rin}
\coloneqq \sum_{x \in S_\rA} \sqrt{p_x} \ket{xx}^{R A_\rin}$,
where~$p_x$ is a probability distribution over the input set~$S_\rA$.
Let~$M$ be the quantum register corresponding to the message in~$\Pi$.
The \emph{quantum information cost\/} (or \emph{quantum information 
complexity\/}) of the protocol~$\Pi$ is defined as
\[
\QIC(\Pi) \quad \coloneqq \quad \frac{1}{2} \; \rI( R : M \,|\, E_{\rB})
\enspace,
\] 
where the registers are in the state immediately after Alice sends the
message register~$M$ to Bob. This expression simplifies to~$\rI( R : M
E_{\rB})$ as the registers~$R, E_{\rB}$ are in a tensor product state at this
point. It is intended to measure the information Bob gains about Alice's
input from the message.  This notion requires a nuanced definition
for protocols with more general inputs and with multiple rounds of
communication. As it is not central to our work, we refer the reader to
Ref.~\cite{Touchette15-QIC} for the definition for general protocols.

\subsection{Compression of quantum states}
\label{sec-compression}

We study one-way protocols for \emph{non-oblivious\/} or
\emph{visible\/} compression of quantum states, which is typical for
tasks of this nature (see, e.g.,
Ref.~\cite{AJ18-efficient-convex-split}). The protocol may be with or
without shared entanglement. Suppose we wish to compress states chosen
from an ensemble~$( (p_x, \rho_x) : x \in S)$ for some finite set~$S$,
where~$p$ is a probability distribution over~$S$ and~$\rho_x \in
\qstate(\cH)$. The ensemble is known to both parties.  The sender, say
Alice, is given a classical input~$x \in S$ chosen according to the
distribution~$p$. Alice and Bob execute a one-way protocol with a
message from Alice to Bob in order to prepare an approximation
of~$\rho_x$ on Bob's side. Following the notation from
Section~\ref{sec-qcp}, we interpret the state of the message
register~$M$ of this protocol as a compression of~$\rho_x$. Suppose the
state of the output register~$B_\out$ is~$\widetilde{\rho}_x$.
We say that the average error of the compression protocol is~$\epsilon
\in [0,2]$ if the output state~$\widetilde{\rho}_x$ is~$\epsilon$-close
in trace distance to the ideal state~$\rho_x$ on average over the
inputs~$x$:
\[
\sum_{x} p_x \trnorm{ \rho_{x} - \widetilde{\rho}_{x} }
    \quad \le \quad \epsilon \enspace.
\]
It is sometimes desirable to express the error in terms of the
purified distance. For simplicity, we state error bounds in terms of trace 
distance; we may express the bounds in terms of purified distance 
via Proposition~\ref{prop-FvdG}.

Note that a protocol for visible compression without shared entanglement
may be characterized by a sequence of quantum states~$(\sigma_{x}: x \in
S)$ and a quantum channel~$\Psi$. We let~$\sigma_x$ be the state of the
message register~$M$ sent by Alice to Bob on input~$x$. We define~$\Psi$
as the channel resulting from the application of the isometry~$V$
followed by the tracing out of the register~$B_1$.  The average error of
the protocol is then~$\sum_{x} p_x \trnorm{ \rho_{x} - \Psi(\sigma_{x})
}$.  Conversely, any choice of states~$(\sigma_x : x \in S, ~ \sigma_x
\in \qstate( \cK))$ and quantum channel~$\Psi : \linear( \cK) \rightarrow
\linear( \cH)$ for some Hilbert space~$\cK$ defines a valid visible 
compression protocol.

An essentially equivalent formulation of the task of visible compression
is the following (with the notation from Section~\ref{sec-qcp}).
Consider the state~$\tau$ over the registers~$R X A_1 C$:
\[
\tau \quad \coloneqq \quad \sum_{x \in S} \sqrt{p_x} \ket{x}^R
    \ket{x}^X \ket{\phi_x}^{A_1 C} \enspace,
\]
where~$\ket{\phi_x}^{A_1 C}$ is a purification of~$\rho_x$, register~$R$
is held by the referee, and registers~$X A_1 C$ together constitute
Alice's input register~$A_\rin$. Alice and Bob both know the full
description of~$\tau$. Their goal is to run a one-way quantum
communication protocol with a message from Alice to Bob, with or without
shared entanglement, such that at the end, the state~$\widetilde{\tau}$
of registers~$R B_\out$ is close to~$\tau^{R C}$: 
\[
\trnorm{ \widetilde{\tau}^{R B_\out} - \tau^{R C} } \quad \le \quad
\epsilon \enspace.
\]
The difference from state-splitting is that for a fixed state~$\ket{x}$
of register~$R$, the purification of the state in register~$B_\out$ may be
shared arbitrarily between Alice and Bob (while in state splitting, it
is required to be held by Alice, in register~$A_1$). A protocol for
state-splitting can thus be used for this task, and conversely lower bounds
on communication or entanglement costs derived for the above task
applies to state-splitting as well.

\section{The main result}
\label{sec:Main-Result}

In this section, we prove the main result of this article. 

\subsection{Two useful lemmas}

We begin with two lemmas that we need for the result.
The first allows us to focus on a finite number of
subspaces of a finite dimensional Hilbert space, in the context 
of measurements. For an operator~$M \in \linear(\cH)$, and a subspace~$\cA$
of~$\cH$, define the semi-norm 
\[
\norm{M}_\cA \quad \coloneqq \quad
    \max_{\ket{w} \; \in \; \sphere(\cA)}
    \size{ \bra{w} M \ket{w} } \enspace.
\]

\begin{lemma}[\cite{JRS03-direct-sum}, Lemma~6]
\label{lemma-net-W} 
Let~$d$ and~$q$ be positive integers with~$q \ge d$,
$\delta>0$ be a real number, and~$\cH$ be an~$q$-dimensional Hilbert
space. There exists a set~$\fT$ of subspaces of~$\cH$ of dimension at
most~$d$ such that
\begin{enumerate}

\item $\size{\fT} \leq \left( \frac{8 \sqrt{d}}{\delta} \right)^{2 q d}$,
and

\item for every~$d$-dimensional subspace~$\cA \subseteq\cH$, there
is a subspace~$\cB \in \fT$ such that for every measurement operator~$M
\in \Pos(\cH)$,
\[
\Big| \norm{M}_\cA - \norm{M}_\cB \Big| \quad \leq \quad \delta \enspace.
\]
\end{enumerate}
\end{lemma}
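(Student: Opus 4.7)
The plan is to construct $\fT$ by taking spans of $d$-tuples of vectors drawn from a fine $\epsilon$-net on the unit sphere $\sphere(\cH)$, and then to show that for any $d$-dimensional subspace $\cA$, an appropriate $\cB \in \fT$ approximates it closely in the relevant semi-norm.

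First, I would apply Proposition~\ref{Fact-epsilon-net} with parameter $\epsilon \eqdef \delta/(2\sqrt{d})$ to obtain a set $N \subseteq \sphere(\cH)$ with $\abs{N} \leq (4/\epsilon)^{2q} = (8\sqrt{d}/\delta)^{2q}$, and define
\[
\fT \quad \eqdef \quad \set{ \Span \set{ \ket{u_1}, \ldots, \ket{u_d}} : \ket{u_i} \in N \text{ for all } i } \enspace.
\]
Each element of $\fT$ has dimension at most $d$, and $\abs{\fT} \leq \abs{N}^d \leq (8\sqrt{d}/\delta)^{2qd}$, establishing part~(1) of the lemma.

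For part~(2), given a $d$-dimensional subspace $\cA$ with orthonormal basis $\ket{e_1},\ldots,\ket{e_d}$, I would choose $\ket{u_i} \in N$ with $\norm{\ket{e_i} - \ket{u_i}} \leq \epsilon$ and take $\cB \eqdef \Span\set{\ket{u_1},\ldots,\ket{u_d}} \in \fT$. Denoting by $V_0$ and $V$ the matrices with columns $\ket{e_i}$ and $\ket{u_i}$ respectively, and setting $D \eqdef V - V_0$, the key technical observation is that the Gram matrix $G \eqdef V^* V$ satisfies $G - \id = V_0^* D + D^* V_0 + D^* D$, so
\[
\norm{G - \id} \quad \leq \quad 2 \norm{V_0} \norm{D} + \norm{D}^2
    \quad \leq \quad 2 \sqrt{d}\, \epsilon + d \epsilon^2 \enspace,
\]
since $\norm{V_0} = 1$ and $\norm{D} \leq \norm{D}_\rF \leq \sqrt{d}\, \epsilon$. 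Hence $\norm{G - \id} = \Order(\sqrt{d}\, \epsilon) = \Order(\delta)$.

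The remainder consists of two symmetric one-sided bounds. For $\ket{w} = V_0 \alpha \in \sphere(\cA)$, the image $\ket{w'} \eqdef V \alpha = \ket{w} + D \alpha$ lies in $\cB$ with $\norm{\ket{w} - \ket{w'}} \leq \sqrt{d}\, \epsilon$ and $\abs{\norm{\ket{w'}}^2 - 1} \leq \norm{G - \id}$; normalising $\ket{w'}$ and using $\norm{M} \leq 1$ yields $\norm{M}_\cB \geq \norm{M}_\cA - \Order(\sqrt{d}\, \epsilon)$. Conversely, for $\ket{w'} = V \beta \in \sphere(\cB)$, invertibility of $G$ (for $\sqrt{d}\, \epsilon$ small enough) gives $\norm{\beta}^2 \leq (\beta^* G \beta)/\lambda_{\min}(G) \leq 1/(1 - \Order(\sqrt{d}\, \epsilon))$; then $\ket{w} \eqdef V_0 \beta \in \cA$ is within $\norm{D}\norm{\beta} = \Order(\sqrt{d}\, \epsilon)$ of $\ket{w'}$, and after normalisation yields $\norm{M}_\cA \geq \norm{M}_\cB - \Order(\sqrt{d}\, \epsilon)$. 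With the chosen $\epsilon = \delta/(2\sqrt{d})$ and the implicit constants tuned, both errors fall below $\delta$.

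The main obstacle is to control $\norm{G - \id}$ sharply enough. A naive entry-wise or Gershgorin-type bound on $G - \id$ would give $\Order(d \epsilon)$ and force $\epsilon = \Order(\delta/d)$, inflating $\abs{\fT}$ by roughly a factor of $d^{qd}$ and breaking the bound in part~(1). The crucial trick is to bound $\norm{D}$ through $\norm{D}_\rF$ rather than through its columns individually: the aggregate inequality $\sum_i \norm{\ket{u_i} - \ket{e_i}}^2 \leq d \epsilon^2$ yields $\norm{D} \leq \sqrt{d}\, \epsilon$, and this tighter estimate propagates to $\norm{G - \id}$ via the identity $G - \id = V_0^* D + D^* V_0 + D^* D$, allowing the gentler scaling $\epsilon = \Theta(\delta/\sqrt{d})$ that matches the claimed cardinality of $\fT$.
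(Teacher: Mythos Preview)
Your proposal is correct and follows essentially the same construction the paper sketches: fix an $\epsilon$-dense subset~$N$ of~$\sphere(\cH)$ via Proposition~\ref{Fact-epsilon-net}, and let~$\fT$ consist of spans of $d$-tuples from~$N$, approximating an orthonormal basis of~$\cA$ coordinate-wise. The paper gives only this outline and defers the analysis to the cited reference; your elaboration---in particular the observation that~$\norm{D} \le \norm{D}_\rF \le \sqrt{d}\,\epsilon$ is what forces~$\epsilon = \Theta(\delta/\sqrt{d})$ rather than~$\Theta(\delta/d)$---correctly identifies the point that makes the cardinality bound in part~(1) achievable.
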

The set~$\fT$ in the lemma is obtained as follows. We fix an~$\epsilon$-dense 
subset~$S$ of~$\sphere(\cH)$ for a suitably small value of~$\epsilon$,
as given by Proposition~\ref{Fact-epsilon-net}.
For any $d$-dimensional subspace~$\cA$, we consider an orthonormal basis, and
the~$d$ vectors in~$S$ closest to the respective elements in the basis.
We include in~$\fT$ the subspace~$\cB$ spanned by the~$d$ vectors from~$S$ so
obtained. 

By a uniformly random subspace of dimension~$\ell$ of an~$m$-dimensional 
Hilbert space~$\cH$, with~$\ell \le m$, we mean the image of a
fixed~$\ell$-dimensional subspace under a Haar-random unitary operator 
on~$\cH$. The next lemma is similar to Lemma~7 from
Ref.~\cite{JRS03-direct-sum}, and is stronger in several respects. It
enables the generalization of the incompressibility result in
Ref.~\cite{JRS03-direct-sum} that we prove, and helps us derive tighter 
bounds for compression. Informally, the lemma states that every state in a
``small enough'' subspace of a bi-partite space has, with high probability,
a small projection onto a ``small enough'' random subspace of one part.
\begin{lemma}
\label{lemma-main} 
Let~$m$, $d$, $\ell$, and~$p$ be positive integers such that~$\ell \le m$. 
Let~$\cW$ be a 
fixed~$d$-dimensional subspace of~$\complex^m \tensor \complex^p$.
Let~$\bcZ$ be a uniformly random subspace of~$\complex^{m}$ of 
dimension~$\ell$, and~$\bM$ be the orthogonal projection operator 
onto~$\bcZ$. Then for any real number~$\alpha > 2$, there is a real
number~$\alpha_1 > 0$ that depends only on~$\alpha$ such that
\[
\Pr\left[ \norm{\bM \tensor \id_{\complex^p}}_{\cW} \geq
        \frac{\alpha \ell}{m} \right]
    \quad \leq \quad \exp\left(-\frac{ \alpha_1 \ell^2 (m-2) }{m^2} \right)
        \enspace,
\]
provided
\[
(\alpha - 2)^2 \ell^2 (m - 2) \quad \ge \quad (4 \times 384) d m^2
    \ln \left( \frac{8 m}{\alpha \ell} \right) \enspace.
\]
We may take~$\alpha_1 \coloneqq \tfrac{(\alpha-2)^{2}}{768}$ in the above
statement.
\end{lemma}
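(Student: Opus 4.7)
The plan is to recast the statement as a concentration of measure problem for a Haar random unitary and then combine it with an~$\epsilon$-net argument over~$\sphere(\cW)$. Fix an arbitrary~$\ell$-dimensional orthogonal projection~$M_{0}$ on~$\complex^{m}$ and write~$\bM = \bU M_{0} \bU^{*}$, where~$\bU$ is Haar distributed on~$\unitary(\complex^{m})$; by the definition of a uniformly random subspace, this matches the distribution in the statement. For each~$\ket{w} \in \sphere(\cW)$ define
\[
f_{w}(U) \quad \coloneqq \quad \bra{w} \!\left( U M_{0} U^{*} \tensor \id_{\complex^{p}} \right)\! \ket{w} \enspace,
\]
so that~$\norm{\bM \tensor \id_{\complex^{p}}}_{\cW} = \sup_{\ket{w} \in \sphere(\cW)} f_{w}(\bU)$.

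I would first analyse~$f_{w}$ at a fixed~$\ket{w}$. By unitary invariance of the Haar measure, $\expct[\bU M_{0} \bU^{*}] = (\ell/m)\, \id_{\complex^{m}}$, and hence~$\expct[f_{w}(\bU)] = \ell/m$ uniformly in~$w$. For the Lipschitz bound in~$U$, writing~$U M_{0} U^{*} - V M_{0} V^{*} = (U-V) M_{0} U^{*} + V M_{0} (U-V)^{*}$ and using~$\norm{M_{0}} = 1$, the operator norm of the difference is at most~$2 \norm{U-V} \leq 2 \norm{U-V}_{\rF}$, which upper bounds~$\abs{f_{w}(U) - f_{w}(V)}$. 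Applying Theorem~\ref{lemma-concentration} with Lipschitz constant~$\kappa = 2$ then yields, for any~$t > 0$,
\[
\Pr\!\left[ f_{w}(\bU) \geq \frac{\ell}{m} + t \right]
    \quad \leq \quad \exp\!\left( - \frac{(m-2)\, t^{2}}{96} \right) \enspace.
\]

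To pass from a single~$\ket{w}$ to the supremum, I would pick a~$\delta$-dense set~$N \subseteq \sphere(\cW)$ via Proposition~\ref{Fact-epsilon-net} with~$\abs{N} \leq (4/\delta)^{2d}$, for~$\delta$ proportional to~$\ell/m$ (say~$\delta = \ell/(2m)$). Since the map~$\ket{w} \mapsto f_{w}(U)$ is~$2$-Lipschitz on~$\sphere(\cW)$ for each fixed~$U$ (because the operator inside the quadratic form has norm at most~$1$), the event~$\set{\, \norm{\bM \tensor \id_{\complex^{p}}}_{\cW} \geq \alpha \ell / m \,}$ is contained in the event that some~$\ket{w'} \in N$ satisfies~$f_{w'}(\bU) \geq \ell/m + (\alpha - 2) \ell / m$. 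A union bound then produces a tail bound of the form~$(8 m/\ell)^{2d} \exp(-(\alpha-2)^{2} \ell^{2} (m-2) / (96\, m^{2}))$, and the quantitative hypothesis is calibrated so that the logarithm of the net-size factor absorbs a controlled fraction of the Gaussian exponent while leaving a residual of~$\alpha_{1} \ell^{2}(m-2)/m^{2}$ with~$\alpha_{1} = (\alpha-2)^{2}/768$.

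The main obstacle will be the bookkeeping that fixes this split. Choosing~$\delta$ too large erases the gap between the threshold~$\alpha \ell / m$ and the mean~$\ell/m$ before the concentration bound can act; choosing~$\delta$ too small inflates the net to the point that~$(4/\delta)^{2d}$ overwhelms the Gaussian decay. Balancing these two effects is what dictates both the form of the sufficient condition—linear in~$d m^{2} \ln(8m/(\alpha \ell))$—and the constant~$768$ appearing in~$\alpha_{1}$.
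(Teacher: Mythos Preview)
Your proposal is correct and follows essentially the same route as the paper: represent~$\bM$ as~$\bU M_0 \bU^{*}$, apply the concentration inequality (Theorem~\ref{lemma-concentration}) to the~$2$-Lipschitz map~$U \mapsto f_w(U)$ at a fixed~$\ket{w}$, and lift to the supremum over~$\sphere(\cW)$ by an~$\epsilon$-net and a union bound. The only differences are in the bookkeeping you already flag: the paper observes that~$\ket{w} \mapsto \bra{w}(\bM\otimes\id)\ket{w}$ is in fact~$1$-Lipschitz (via~$\tfrac{1}{2}\trnorm{\density{u}-\density{v}} \le \norm{\ket{u}-\ket{v}}$) and takes the net parameter~$\alpha\ell/(2m)$ rather than~$\ell/(2m)$, so that the net size is~$(8m/(\alpha\ell))^{2d}$, matching the logarithm in the stated hypothesis and producing~$\alpha_1=(\alpha-2)^2/768$ on the nose.
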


\begin{proof}
The subspace~$\cW$ is isomorphic to~$\complex^d$ as it is~$d$-dimensional.
By Proposition~\ref{Fact-epsilon-net}, there is a set~$N$
with~$\size{N} \leq \left( \frac{8 m}{\alpha \ell} \right)^{2d}$
that is a~$\frac{\alpha \ell}{2 m}$-dense set of~$\sphere(\cW)$.

Note that for any two vectors~$\ket{u},\ket{v} \in \sphere(\complex^{m} \tensor 
\complex^{p})$, we have
\begin{IEEEeqnarray*}{rCl}
\abs{ \bra{u} (\bM \tensor \id) \ket{u} - \bra{v} (\bM \tensor \id) \ket{v}} \quad
    & = & \quad \size{ \trace \left( \bM \density{u} - \bM \density{v}
        \right) } \\
    & \leq & \quad \frac{1}{2} \trnorm{\density{u} - \density{v} } \qquad \qquad 
        \textrm{(by Proposition~\ref{prop-trnorm})} \\
    & \leq & \quad \frac{1}{2} \trnorm{(\ket{u}-\ket{v})\bra{v}} 
        + \frac{1}{2}\trnorm{\ket{u}(\bra{u}-\bra{v})} \\
    & = & \quad \norm{\ket{u}} \norm{\ket{u}-\ket{v}} \quad = \quad  
        \norm{\ket{u}-\ket{v}} \enspace.
\end{IEEEeqnarray*}
This implies that if~$\norm{ M \tensor \id_{\complex^p} }_\cW \geq
\tfrac{\alpha \ell}{m}$, there
is a vector~$\ket{v} \in N$ such that~$\bra{v} (\bM \tensor \id) \ket{v} \geq
\tfrac{\alpha \ell}{2 m}$. By the Union Bound, we get
\begin{IEEEeqnarray}{rCl}
\label{eq-bound1}
\Pr \left[ \norm{\bM \tensor \id_{\complex^p}}_{\cW} \geq
        \frac{\alpha \ell}{m} \right] \quad
    & \leq & \quad \size{N} \times \max_{\ket{v} \in N} 
        \Pr \left[ \bra{v} (\bM \tensor \id) \ket{v} \geq 
        \frac{\alpha \ell}{2 m} \right] \enspace.
\end{IEEEeqnarray}

Consider any fixed vector~$\ket{v} \in N$ and let~$P \in
\Pos(\complex^{m})$ be a fixed orthogonal projection of 
rank~$\ell$. Consider the 
function~$f : \unitary(\complex^{m}) \rightarrow \reals$ defined as
\[
f(U) \quad \coloneqq \quad \bra{v} \left( U P U^{*}
    \tensor \id_{\complex^p} \right) \ket{v} \enspace.
\] 
For any~$U,W \in \unitary(\complex^{m})$, we have
\begin{IEEEeqnarray*}{rCl}
\size{f(U) - f(W)} \quad
    & = & \quad \Big| \trace \big[ \left( \left( U P U^* - W P W^* \right)
        \tensor \id \right) \density{v} \; \big] \Big| \\
    & \le & \quad \norm{ U P U^* - W P W^* } \\
    & \le & \quad \norm{ U P U^* - W P U^* } + \norm{ W P U^* - W P W^* } \\
    & \le & \quad \norm{ U - W } + \norm{ U^* - W^* } \\
    & \le & \quad 2 \norm{U - W}_\rF \enspace.
\end{IEEEeqnarray*}
\suppress{
where~$\norm{ \cdot }$ denotes the spectral norm, and~$\norm{ \cdot }_\rF$ 
denotes the Frobenius norm.
}
So~$f$ is~$2$-Lipschitz.

Let~$\bU \in \unitary(\complex^{m})$ be a Haar-random unitary operation.
The expectation of~$f(\bU)$ is:
\begin{IEEEeqnarray*}{rCl}
\expct [ f(\bU) ] \quad
    & = & \quad \bra{v} \Big( \expct[ \bU P \bU^{*} ]
        \tensor \id \Big) \ket{v} \\
    & = & \quad \bra{v} \left( \ell
        \frac{\id}{m} \tensor \id \right) \ket{v} \\
    & = & \quad \frac{ \ell }{ m } \enspace.
\end{IEEEeqnarray*}
Since~$\bU P \bU^{*}$ and~$\bM$ have the same distribution, by
Theorem~\ref{lemma-concentration} we get
\begin{align*}
\Pr \left[ \bra{v} \left( \bM \tensor \id \right) 
        \ket{v} \geq \frac{\alpha \ell}{2 m} \right]
    \quad & = \quad \Pr \left[ \bra{v}
        \left( \bU P \bU^{*} \tensor \id \right) \ket{v} 
        \geq \frac{\alpha \ell}{2 m} \right] \\
    & \leq \quad \exp \!\left( -\frac{(m-2)(\alpha - 2)^{2} \ell^2}
        {384 m^2} \right) \enspace.
\end{align*}

By Eq.~(\ref{eq-bound1}), we get 
\begin{IEEEeqnarray*}{rCl}
\Pr \left[ \norm{\bM \tensor \id_{\complex^p}}_{\cW} \geq \frac{\alpha
        \ell}{m} \right] \quad
    & \leq & \quad \left( \frac{8 m}{\alpha \ell} \right)^{2d}
        \exp \left( -\frac{(m-2) (\alpha - 2)^{2} \ell^2 }{ 384 m^2 }
        \right) \\
    & \leq & \quad \exp \left( -\frac{(m-2) (\alpha - 2)^2 \ell^2 }{ 768 m^2 } 
        \right) \enspace,
\end{IEEEeqnarray*}
provided the~$m, \ell, d, \alpha$ satisfy the stated condition.
\end{proof}

\subsection{The ensemble and its compressibility}

We study an ensemble of the same form as in
Ref.~\cite{JRS03-direct-sum}. For positive integers~$n,m,k$ such
that~$k$ divides~$m$ and~$n$,
let~$B_{i} = \left( \ket{b_{i1}}, \ket{b_{i2}}, \dotsc, \ket{b_{im}} \right)$
be a suitably chosen orthonormal basis for~$\complex^{m}$, for each~$i
\in \left[ \tfrac{n}{k} \right]$. Let~$\left( B_{ij} : j \in [ k ]
\right)$ be a partition of~$B_i$ into~$k$ equal
size sets. Define~$\rho_{ij} \coloneqq \tfrac{k}{m} \sum_{\ket{v} \in B_{ij}} 
\ketbra{v}{v}$.
We show that there is a choice of bases such that the ensemble
\begin{equation}
\label{eq-ensemble}
\left( \left( \tfrac{1}{n}, \rho_{ij} \right) : i 
\in \left[ \frac{n}{k} \right], j \in [ k ]  \right)
\end{equation}
cannot be compressed significantly in the absence of shared entanglement. The
following theorem, which we prove along the same lines as Theorem~5
in Ref.~\cite{JRS03-direct-sum}, contains the crux of the argument.

\begin{theorem} 
\label{thm-compression}
Let~$\beta \in (0,1), ~ \epsilon \in ( 0, 2)$, and~$\nu \in (0, 1 - \epsilon/2 )$.
Let~$k, m, n, d$ be positive integers such that $k$ divides~$m$ and~$n$.
There exists an ensemble of~$n$ quantum states~$( \rho_{ij} )$ of the form
in~Eq.~(\ref{eq-ensemble}) such that for any sequence of quantum states~$\big(
\sigma_{ij} : \sigma_{ij} \in \qstate(\complex^{d}), \, i \in \left[
\tfrac{n}{k} \right], \, j \in [k] \big)$, and for all quantum 
channels~$\Psi : \linear( \complex^d ) \rightarrow \linear( \complex^m )$,
we have
\[
\Big| \set{ (i,j) : \trnorm{ \rho_{ij} - \Psi( \sigma_{ij} ) } > \epsilon } 
    \Big| \quad > \quad \beta n \enspace,
\]
when
\begin{IEEEeqnarray*}{rCl}
k \quad & \ge &\quad \frac{4}{1 - \epsilon/2 - \nu} \enspace, \\
m \quad & >   & \quad \max \left\{ \frac{3}{\gamma} \ln \!
    \left( \frac{\e}{ 1 - \beta} \right) , ~~
    \frac{3}{\gamma} \ln k , ~~
    2 + \frac{d}{ \gamma }
    \ln \! \left( \frac{16}{1 - \epsilon/2 - \nu} \right) \right\}
    \enspace, \qquad \textrm{and} \\
n \quad & > & \quad \frac{6 k d^2 m}{ \gamma (1 - \beta)} \;
    \ln \! \left( \frac{8 \sqrt{d} }{ \nu } \right) \enspace,
\end{IEEEeqnarray*}
where~$\gamma \coloneqq \tfrac{(1 - \epsilon/2 - \nu)^2}{ 8 \times 768}$.
\end{theorem}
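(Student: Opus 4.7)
The plan is to use the probabilistic method: draw each orthonormal basis~$B_i$ as the image of the canonical basis of~$\complex^m$ under an independently Haar-random unitary $\bU_i \in \unitary(\complex^m)$, with the partition $(B_{ij})_{j\in[k]}$ into~$k$ equal blocks fixed a priori, so that the subspaces $\cA_{ij} \coloneqq \Span(B_{ij})$ are uniformly distributed $(m/k)$-dimensional subspaces of~$\complex^m$, independent across~$i$. Given any putative compression scheme $\bigl((\sigma_{ij}),\Psi\bigr)$, write $\Psi(\cdot) = \trace_{\complex^p}\bigl(V(\cdot)V^*\bigr)$ via Stinespring dilation, with $V : \complex^d \to \complex^m \otimes \complex^p$ an isometry and $p \le md$, and set $\cW \coloneqq V(\complex^d)$, a subspace of dimension at most~$d$. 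Writing $\Pi_{ij}$ for the projector onto $\cA_{ij}$ and using $\trace(\Pi_{ij}\rho_{ij})=1$, the Holevo--Helstrom bound (Proposition~\ref{prop-trnorm}) applied with measurement $\Pi_{ij}$ yields
\[
\trnorm{\rho_{ij}-\Psi(\sigma_{ij})} \le \epsilon
\quad\Longrightarrow\quad
\norm{\Pi_{ij}\otimes\id_{\complex^p}}_\cW \;\ge\; \trace\!\bigl((\Pi_{ij}\otimes\id)\,V\sigma_{ij}V^*\bigr) \;\ge\; 1-\epsilon/2,
\]
because $V\sigma_{ij}V^*$ is a unit-trace state supported on~$\cW$. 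Thus the adversary's states $\sigma_{ij}$ are entirely absorbed into~$\cW$, and it suffices to control $\norm{\Pi_{ij}\otimes\id}_\cW$ uniformly over at-most-$d$-dimensional subspaces $\cW \subseteq \complex^m\otimes\complex^p$.

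To discretize this uniformity I apply Lemma~\ref{lemma-net-W} inside $\complex^m\otimes\complex^p$ with $\delta=\nu$, obtaining a net~$\fT$ of subspaces of dimension at most~$d$ with $\abs{\fT} \le (8\sqrt d/\nu)^{2(mp)d}$ such that every~$\cW$ admits $\cB\in\fT$ with $\bigl|\norm{M}_\cW - \norm{M}_\cB\bigr| \le \nu$ on every measurement operator~$M$. A compression success at $(i,j)$ then forces $\norm{\Pi_{ij}\otimes\id}_\cB \ge 1-\epsilon/2-\nu$ for some $\cB\in\fT$. For each fixed~$\cB$ and $(i,j)$, Lemma~\ref{lemma-main} applies with $\ell = m/k$ and $\alpha = k(1-\epsilon/2-\nu)$: the hypothesis $k\ge 4/(1-\epsilon/2-\nu)$ gives $\alpha\ge 4$, hence $\alpha-2\ge\alpha/2$ and $\alpha_1 \ge \alpha^2/(4\cdot 768) = 2\gamma k^2$, while the stated lower bound on~$m$ verifies the technical hypothesis of Lemma~\ref{lemma-main}. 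The conclusion is the uniform bound
\[
q_0 \;\coloneqq\; \Pr_{\bU_i}\!\bigl[\,\norm{\Pi_{ij}\otimes\id}_\cB \ge 1-\epsilon/2-\nu\,\bigr] \;\le\; \exp\!\bigl(-2\gamma(m-2)\bigr).
\]

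The main obstacle is the final concentration step, because $\ln\abs{\fT}$ is of order $m^2 d^2 \ln(1/\nu)$, vastly exceeding the mean bad-count $nq_0$; two features rescue matters, namely independence of the~$\bU_i$ across~$i$ and the exponential smallness of~$q_0$. For fixed~$\cB$, let $X_{ij}^{(\cB)} \in \{0,1\}$ be the indicator of $\norm{\Pi_{ij}\otimes\id}_\cB \ge 1-\epsilon/2-\nu$, set $Y_i \coloneqq \sum_j X_{ij}^{(\cB)} \in \{0,1,\ldots,k\}$, and let $Z_i \coloneqq \mathbb{1}[Y_i \ge 1]$, a Bernoulli variable of mean at most $kq_0$ by union bound over~$j$. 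The $Z_i$ are independent across~$i$ and $\sum_{i,j} X_{ij}^{(\cB)} \le k\sum_i Z_i$, so a multiplicative Chernoff bound at threshold $(1-\beta)n/k$ yields
\[
\Pr\!\Bigl[\,\textstyle\sum_{i,j} X_{ij}^{(\cB)} \ge (1-\beta)n\,\Bigm|\,\cB\,\Bigr] \;\le\; \Bigl(\tfrac{\e\, k^2 q_0}{1-\beta}\Bigr)^{(1-\beta)n/k}.
\]
The $m$-conditions of the theorem ensure $2\gamma(m-2) \ge 2\ln(\e k^2/(1-\beta))$, which upgrades the right-hand side to $\exp\!\bigl(-(1-\beta)n\gamma(m-2)/k\bigr)$. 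Multiplying by $\abs{\fT}$ and invoking the stated lower bound on~$n$, of order $kmd^2\ln(8\sqrt d/\nu)/[(1-\beta)\gamma]$, makes the union-bound product strictly less than~$1$, so by the probabilistic method there exist bases $(B_i)$ for which every compression fails on strictly more than $\beta n$ states.
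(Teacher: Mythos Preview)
Your proposal is correct and follows essentially the same route as the paper: random bases via Haar-random unitaries, Stinespring dilation to reduce the channel to a $d$-dimensional subspace $\cW$, the net Lemma~\ref{lemma-net-W} to discretize the space of such subspaces, Lemma~\ref{lemma-main} for the single-index tail bound, and a Chernoff-type bound exploiting independence across~$i$ to beat the net size. The paper uses the unitary form of Stinespring (ambient space $\complex^d\otimes\complex^m\otimes\complex^m$) while you use the isometry form with $p\le md$; both give the same net-size exponent~$2d^2m^2$. One minor slip: your multiplicative Chernoff step should yield base $\e k q_0/(1-\beta)$ rather than $\e k^2 q_0/(1-\beta)$ (since $eNp/t = e\cdot(n/k)\cdot(kq_0)/((1-\beta)n/k) = ekq_0/(1-\beta)$); this is harmless here because your tail estimate $q_0\le\exp(-2\gamma(m-2))$ is sharper than the paper's $\exp(-\gamma m)$ and the factor-of-$3$ slack in the $m$-hypotheses absorbs the extra~$\ln k$.
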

 
\begin{proof}
We use the Probabilistic Method to show the existence of an ensemble
with the claimed property. We first derive a simpler property that
suffices.

For~$ i \in \left[ \tfrac{ n }{ k } \right]$ and~$j \in [ k ]$, 
let~$\tau_{ij} \in \qstate(\complex^{m})$ be~$m$-dimensional quantum
states and~$M_{ij}$ be the orthogonal projection onto the support of~$\tau_{ij}$.
By Proposition~\ref{prop-trnorm}, the condition
\begin{IEEEeqnarray}{rCl}
\label{eq-cond1}
\Big| \trace \left( M_{ij} \tau_{ij} \right)
        - \trace \left( M_{ij} \Psi( \sigma_{ij} ) \right) \Big| \quad
    & > & \quad \frac{ \epsilon}{ 2}
\end{IEEEeqnarray}
implies that~$\trnorm{ \tau_{ij} - \Psi( \sigma_{ij})} > \epsilon $.
Since~$\trace( M_{ij} \tau_{ij} ) = 1$, Eq.~(\ref{eq-cond1}) is
equivalent to
\begin{IEEEeqnarray}{rCl}
\label{eq-cond2}
\trace \left( M_{ij} \Psi( \sigma_{ij} ) \right) \quad
    & < & \quad 1 - \frac{ \epsilon}{ 2} \enspace.
\end{IEEEeqnarray}

Consider the following Stinespring representation~\cite[Corollary~2.27,
Sec.~2.2]{W18-TQI} of the quantum 
channel~$\Psi : \linear( \complex^d ) \rightarrow \linear( \complex^m )$ in
terms of a unitary operation~$ U \in \unitary( \cA \tensor \cB \tensor
\cC)$ and a fixed pure state~$\ancilla \in \cB \tensor \cC$, 
with~$\cA = \complex^d, \cB = \cC = \complex^m $:
\[
\Psi(\omega) \quad = \quad \trace_{ \cA \tensor \cB}
\Big[ U (\omega \tensor \density{ \bar{0} }) U^{*} \Big]
    \qquad\qquad  \forall \omega \in \linear( \complex^d ) \enspace.
\]
So we have
\begin{align*}
\trace \left( M_{ij} \Psi( \sigma_{ij} ) \right)
    \quad & = \quad \trace \Big( M_{ij} \; \trace_{ \cA \tensor \cB}
        \big[ U ( \sigma_{ij} \tensor \density{ \bar{0}} ) U^* \big]
        \Big) \\
    & = \quad \trace \Big( \left( M_{ij} \tensor \id_{ \cA \tensor \cB} 
        \right) U ( \sigma_{ij} \tensor \density{ \bar{0}} ) U^* \Big)
        \enspace,
\end{align*}
and Eq.~(\ref{eq-cond2}) is equivalent to
\begin{IEEEeqnarray}{rCl}
\label{eq-cond3}
\trace \Big( \left( M_{ij} \tensor \id_{ \cA \tensor \cB}
    \right) U ( \sigma_{ij} \tensor \density{ \bar{0}} ) U^* \Big) \quad
     & < & \quad 1 - \frac{ \epsilon}{ 2} \enspace.
\end{IEEEeqnarray}

For a fixed unitary operator~$U$, for any~$i,j$, 
the state~$U ( \sigma_{ij} \tensor \density{ \bar{0}} ) U^*$ 
belongs to~$\qstate(\cX)$ where~$\cX \coloneqq U( \cA \tensor \ancilla)$
is a fixed~$d$-dimensional subspace of~$\cA \tensor \cB \tensor \cC$.
Thus, the expression on the left in Eq.~(\ref{eq-cond3}) is bounded
by~$\norm{ M_{ij} \tensor \id_{ \cA \tensor \cB} }_\cX $ for every~$i,j$.
So it suffices to exhibit an ensemble such
that for all~$d$-dimensional subspaces~$\cW \subseteq \cA \tensor \cB
\tensor \cC$,
\begin{IEEEeqnarray*}{rCl}
\Big| \set{ (i,j) : \norm{ M_{ij} \tensor \id_{ \cA \tensor \cB} }_\cW <
        1 - \frac{ \epsilon}{ 2} } \Big| \quad
    & > & \quad \beta n \enspace.
\end{IEEEeqnarray*}
By Lemma~\ref{lemma-net-W}, for any~$\nu > 0$, 
there is a collection~$\fT$ of subspaces
of~$\cA \tensor \cB \tensor \cC$ of dimension at most~$d$, 
such that size~$\size{ \fT } \le ( 8
\sqrt{d} / \nu)^{ 2 d^2 m^2}$, and for all subspaces~$\cW$ as
above, there is a subspace~$\cY \in \fT$ such that for all~$i,j$,
\begin{IEEEeqnarray*}{rCl}
\Big| \norm{ M_{ij} \tensor \id_{ \cA \tensor \cB} }_\cW 
        - \norm{ M_{ij} \tensor \id_{ \cA \tensor \cB} }_\cY \Big| \quad
    & \le & \quad \nu \enspace.
\end{IEEEeqnarray*}
Taking~$\nu < 1 - \frac{ \epsilon}{ 2}$, it suffices to produce an 
ensemble such that for all subspaces~$\cY \in \fT$, 
\begin{IEEEeqnarray}{rCl}
\label{eq-cond4}
\Big| \set{ (i,j) : \norm{ M_{ij} \tensor \id_{ \cA \tensor \cB} }_\cY <
        1 - \frac{ \epsilon}{ 2} - \nu } \Big| \quad
    & > & \quad \beta n \enspace.
\end{IEEEeqnarray}

We pick bases~$\bB_i$ independently and uniformly at random, i.e., for
each~$i$, independently pick a Haar-random unitary operator
on~$\complex^m$, and let~$\bB_i$ be the basis defined by its columns.
Partition~$\bB_i$ into~$k$ sets~$\left(\bB_{ij} : j \in[k] \right)$ of
equal size. We then define an ensemble of the form in
Eq.~(\ref{eq-ensemble}) with~$\bm{\rho}_{ij} \coloneqq \tfrac{k}{m}
\sum_{\ket{v} \in \bB_{ij}} \density{v}$, and the corresponding projection
operators~$\bM_{ij} \coloneqq \sum_{\ket{v} \in \bB_{ij}} \density{v}$. 
We show that with non-zero probability, the 
operators~$\bM_{ij}$ satisfy Eq.~(\ref{eq-cond4}) for all~$\cY \in \fT$, 
by bounding the probability of the complementary event.

Suppose the operators~$\bM_{ij}$ do not satisfy Eq.~(\ref{eq-cond4}) for
some subspace~$\cY \in \fT$. Then
\begin{IEEEeqnarray}{rCl}
\label{eq-cond5}
\Big| \set{ (i,j) : \norm{ \bM_{ij} \tensor \id_{ \cA \tensor \cB} }_\cY <
        1 - \frac{ \epsilon}{ 2} - \nu } \Big| \quad
    & \le & \quad \beta n \enspace.
\end{IEEEeqnarray}
Equivalently, there are at least~$ (1 - \beta) n $ pairs~$i, j$ such 
that~$\norm{ \bM_{ij} \tensor \id }_\cY \ge 1 - \epsilon/2 - \nu $. In
particular, there are at least~$(1 - \beta) n / k$ indices~$i$ such that 
there is at least one~$j \in [k]$ with~$\norm{ \bM_{ij} \tensor \id
}_\cY \ge 1 - \epsilon/2 - \nu $.
For convenience, by~$\bE_i(\cY)$ we denote the event that there is
some~$j \in [k]$ with~$\norm{ \bM_{ij} \tensor \id
}_\cY \ge 1 -  \epsilon/2 - \nu $, and by~$\bI(\cY)$, we denote the subset
of indices~$i \in \left[ \tfrac{n}{k} \right]$ such that~$\bE_i(\cY)$ occurs.

Let~$q \coloneqq \ceil{(1 - \beta) \tfrac{n}{k}}$. By the above reasoning,
it suffices to 
bound the probability that for some subspace~$\cY \in \fT$, the
subset~$\bI(\cY)$ has at least~$q$ indices.

By Lemma~\ref{lemma-main}, for a fixed subspace~$\cY$ and pair~$i,j$,
\begin{IEEEeqnarray*}{rCl}
\Pr\Big[ \norm{\bM_{ij} \tensor \id}_{\cY} \geq 1 - \epsilon/2 - \nu 
        \Big] \quad
    & \leq & \quad \exp\left( - \frac{ ( (1 - \epsilon/2 - \nu) k
        - 2)^2 (m-2) }{768 k^2} \right) \\
    & \le & \quad \exp( - \gamma m ) \enspace,
\end{IEEEeqnarray*}
with~$\gamma \coloneqq \tfrac{(1 - \epsilon/2 - \nu)^2}{ 8 \times 768}$,
when~$(1 - \epsilon/2 - \nu) k \ge 4$ and
\[
m - 2 \quad \ge \quad 
    \frac{(16 \times 384) d}{(1 - \epsilon/2 - \nu)^2} \;
    \ln \! \left( \frac{8}{1 - \epsilon/2 - \nu} \right) 
    \enspace.
\]
So by the Union Bound
\begin{IEEEeqnarray*}{rCl}
\Pr \Big[ \bE_i(\cY) \Big] \quad & \le & \quad k \, \exp( - \gamma m ) \enspace,
\end{IEEEeqnarray*}
and by the Union Bound and the independence of~$\bM_{ij}$ for
distinct indices~$i$,
\begin{IEEEeqnarray*}{rCl}
\Pr \Big[ \size{ \bI(\cY) } \ge q \Big] \quad 
    & \le & \quad { { \frac{n}{k} } \choose q }
        \times \left( k \, \exp( - \gamma m ) \right)^q \enspace.
\end{IEEEeqnarray*}
Finally, we get
\begin{IEEEeqnarray*}{rCl}
\Pr \Big[ \exists \cY \in \fT : \textrm{Eq.~(\ref{eq-cond5}) holds}
        \Big] \quad
    & \le & \quad \size{\fT} \times \max_{\cY \in \fT} \: \Pr \Big[
        \size{ \bI(\cY) } \ge q \Big] \\
    & \le & \quad \left( \frac{ 8 \sqrt{d} }{ \nu } \right)^{ 2 d^2 m^2 }
        { \frac{n}{k} \choose q } 
        \left( k \, \exp( - \gamma m ) \right)^q \\
    & < & \quad 1 \enspace,
\end{IEEEeqnarray*}
when~$m > \max \left\{ \tfrac{3}{\gamma} \ln \left( \tfrac{\e}{ 1 -
\beta} \right) , \tfrac{3}{\gamma} \ln k \right\}$, and
\[
\gamma (1 - \beta) n \quad > \quad 6 k d^2 m \; \ln \! \left(
    \frac{8 \sqrt{d} }{ \nu } \right) \enspace.
\]
This proves the theorem.
\end{proof}

Note that the above proof considers an arbitrary choice of
states~$\sigma_{ij}$ and quantum channel~$\Psi$ \emph{after\/} the
ensemble is chosen randomly. Together, the sequence~$(\sigma_{ij})$ and
the channel~$\Psi$ constitute a compression 
protocol. The proof shows that no matter how~$(\sigma_{ij})$ and~$\Psi$ are
chosen, the error due to the corresponding compression protocol is
large if the dimension~$d$ is much smaller than~$m$ (provided~$n$ is
chosen properly).

\subsection{Application to entanglement cost}

Consider a one-way protocol~$\Pi$ in which with probability~$1/n$, Alice 
gets input~$(i,j)$, prepares state~$\rho_{ij}$ as in an ensemble given by 
Theorem~\ref{thm-compression}, and sends it to Bob.
The ensemble average~$\rho$ is 
the completely mixed state~$\tfrac{\id}{m}$ over~$\complex^m$. By construction, we 
have~$\rS( \rho_{ij} \| \rho) = \log k $, and therefore~$\QIC(\Pi) =
\tfrac{1}{2} \, \log k $. In fact, we
have~$\rS_{\max}( \rho_{ij} \| \rho) = \log k$. Theorem~I.1(1) of 
Ref.~\cite{BNR18} gives us a protocol for the visible compression of any
such ensemble of states using classical communication and shared
entanglement, with error~$\epsilon$. The communication cost of this
protocol is
\[
\Imax^{\epsilon/\sqrt{2}}(A:B)_{\tau} + \Order(\log \log (
    1/ \epsilon)) \enspace,
\]
where~$\tau^{AB} \coloneqq \tfrac{1}{n} \sum_{ij} \density{ij}^A \tensor
\rho_{ij}^B$ and we have used Proposition~\ref{prop-FvdG} to translate
between purified and trace distance. This expression is bounded from
above by~$\log k + \Order(\log \log \tfrac{1}{ \epsilon})$,
since~$\rS_{\max}( \rho_{ij} \| \rho)$ (and therefore~$\Imax(A:B)_{\tau}$) 
equals~$\log k$.  Using superdense
coding~\cite[Section~6.3.1]{W18-TQI}, we get a bound on the quantum
communication cost of compressing the ensemble with entanglement
assistance.
\begin{proposition}
\label{prop-ub}
For any positive integers~$k, m, n$ such that~$k$ divides~$m$ and~$n$,
and error parameter~$\epsilon > 0$,
any ensemble of~$n$ equally likely quantum states in~$\qstate(\complex^m)$ 
of the form in Eq.~(\ref{eq-ensemble}) there is a one-shot one-way
protocol \textbf{with shared entanglement} for compressing the states with
quantum communication at most
\[
\tfrac{1}{2} \log k + \Order(\log \log \tfrac{1}{ \epsilon}) \enspace,
\]
with average error at most~$\epsilon$ in trace distance.
\end{proposition}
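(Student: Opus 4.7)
The plan is to reduce the statement to a direct invocation of the entanglement-assisted classical compression protocol of Bab Hadiashar, Nayak, and Renner (Theorem~I.1(1) of Ref.~\cite{BNR18}), followed by superdense coding. The ensemble is structured enough that its smooth max-information admits a clean closed-form bound, which is what makes this approach go through.

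First I would compute the ensemble average state. By construction, for each fixed~$i \in [n/k]$, the sets~$B_{ij}$ for~$j \in [k]$ partition the orthonormal basis~$B_i$ of~$\complex^m$, so
\[
\frac{1}{k} \sum_{j \in [k]} \rho_{ij} \quad = \quad \frac{1}{m}
    \sum_{\ket{v} \in B_i} \density{v} \quad = \quad \frac{\id}{m} \enspace.
\]
Averaging over~$i$ as well gives~$\rho \coloneqq \tfrac{1}{n} \sum_{ij} \rho_{ij} = \tfrac{\id}{m}$. Next I would note that each~$\rho_{ij} = \tfrac{k}{m} \Pi_{ij}$ for an orthogonal projection~$\Pi_{ij}$ onto an~$(m/k)$-dimensional subspace, so~$\rho_{ij} \le k \rho$, and equality holds on the support of~$\rho_{ij}$. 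Therefore~$\Smax(\rho_{ij} \| \rho) = \log k$ for every~$(i,j)$.

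The second step is to bound~$\Imax(A:B)_\tau$ for~$\tau^{AB} \coloneqq \tfrac{1}{n} \sum_{ij} \density{ij}^A \tensor \rho_{ij}^B$. Choosing~$\sigma^B = \rho$ in the minimization defining~$\Imax$, the inequality~$\rho_{ij} \le k \rho$ lifts termwise to~$\tau^{AB} \le k \, \tau^A \tensor \rho$, so~$\Imax(A:B)_\tau \le \log k$, and thus the smooth version~$\Imax^{\epsilon/\sqrt{2}}(A:B)_\tau$ is also at most~$\log k$.

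Third, I would invoke Theorem~I.1(1) of Ref.~\cite{BNR18}, which gives a one-shot one-way protocol with shared entanglement and purely classical communication that accomplishes visible compression of~$(\tfrac{1}{n}, \rho_{ij})$ to within purified distance~$\epsilon/\sqrt{2}$ using~$\Imax^{\epsilon/\sqrt{2}}(A:B)_\tau + \Order(\log \log (1/\epsilon))$ classical bits. By Proposition~\ref{prop-FvdG}, purified distance~$\epsilon/\sqrt{2}$ implies trace distance at most~$\epsilon \sqrt{2 - \epsilon^2}/\sqrt{2} \le \epsilon$, preserving the error guarantee. Finally I would replace the classical channel with a quantum one via superdense coding~\cite[Section~6.3.1]{W18-TQI}, which halves the number of channel uses and gives quantum communication cost~$\tfrac{1}{2}\log k + \Order(\log\log(1/\epsilon))$. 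There is no genuine obstacle here; the only care needed is to verify that the max-information calculation saturates the trivial bound~$\log k$ and that the conversion between trace and purified distance does not inflate the error past~$\epsilon$, both of which are routine.
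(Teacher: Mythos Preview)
Your proposal is correct and follows essentially the same route as the paper: compute the ensemble average~$\rho = \id/m$, observe that~$\Smax(\rho_{ij}\|\rho) = \log k$ so that~$\Imax(A:B)_\tau \le \log k$, invoke Theorem~I.1(1) of Ref.~\cite{BNR18} for the classical-communication protocol, convert the purified-distance guarantee to trace distance via Proposition~\ref{prop-FvdG}, and apply superdense coding to halve the communication. The paper's argument is exactly this, presented in the paragraph immediately preceding the proposition.
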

This bound is an additive term of~$\Order(\log \log \tfrac{1}{ \epsilon})$
more than~$\QIC(\Pi)$. Theorem~I.1(1) in Ref.~\cite{BNR18} also gives a 
lower bound of~$(1/2) \, \Imax^{\sqrt{\epsilon}}(A:B)_{\tau}$ on the 
communication cost, which is at least~$(1/2) \log k - 2$ for~$\epsilon
\le 1/81$ (see Proposition~\ref{prop-imax-lb} in the appendix).
So for constant~$\epsilon$, the upper 
bound in Proposition~\ref{prop-ub} is close to optimal as a function of~$k$. 
It is slightly better than those obtained from protocols for state 
splitting (see, e.g., Ref.~\cite[Corollary~5]{AJ18-efficient-convex-split}), 
which have an additive term of order~$\log \tfrac{1}{\epsilon}$.
However, the protocol from Ref.~\cite{BNR18} has entanglement cost of 
order~$k (\log \tfrac{1}{\epsilon}) \log m$, which is exponential in 
the communication cost, while the protocol for state splitting with the 
least known communication cost~\cite[Corollary~5]{AJ18-efficient-convex-split}
has entanglement cost of order~$(1 + 1/\epsilon^2) \log (m / \epsilon)$.

Next we consider how small the entanglement cost of the visible compression 
of an ensemble~$( \rho_{ij})$ given by Theorem~\ref{thm-compression} may be.
By choosing the parameters in the statement of Theorem~\ref{thm-compression} 
appropriately, we get the following lower bound on the sum of 
communication and entanglement costs of any compression protocol.
\begin{corollary}
\label{cor-ent}
There exist universal constants~$c_1, c_2, c_3 >0$ such that for 
any~$\epsilon \in (0,1)$ and any positive integers~$k, m, n$ with~$m$ and~$n$
divisible by~$k$, there is an ensemble of~$n$ equally likely quantum states
in~$\qstate(\complex^m)$ of the form in Eq.~(\ref{eq-ensemble}) for
which any (one-shot) one-way protocol
for compressing the states with average error at
most~$\tfrac{\epsilon}{2}$, the sum of the communication and
entanglement costs is at least
\begin{equation}
\label{eq-ce-bd}
 \log m - 2 \log \frac{1}{1 - \epsilon } - \log \ln \frac{16}{ 1 -
\epsilon} - c_2 \enspace,
\end{equation}
when~$k \ge 6/(1-\epsilon)$, $m \ge c_1 (\ln k) / (1-\epsilon)^2$, and
\[
n \quad \ge \quad \frac{c_3}{ (1-\epsilon)^2 } \, k m^3 \ln
\frac{16 \sqrt{m}}{ \epsilon } \enspace.
\]
In particular, the entanglement cost 
of any such protocol with \textbf{optimal} communication cost is at 
least
\[
\log m - \frac{1}{2} \log k - \Order \Big(\log \frac{ 1}{1-\epsilon} \Big)
- \Order(1) \enspace,
\]
and the communication cost of any such protocol \textbf{without entanglement} 
is at least the bound given in Eq.~(\ref{eq-ce-bd}).
\end{corollary}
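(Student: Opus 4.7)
The plan is to reduce entanglement-assisted compression to the unassisted setting and then invoke the contrapositive of Theorem~\ref{thm-compression}. Given any one-way protocol with communication cost~$c$ and entanglement cost~$e$, Alice can simulate it without shared entanglement by locally preparing the shared pure state~$\ket{\phi}^{E_\rA E_\rB}$ (realizing Bob's share in a register of dimension equal to its Schmidt rank~$2^e$), applying her local isometry~$U$, and then transmitting both the original message register~$M$ and Bob's copy~$E_\rB$; Bob applies the original~$V$ to~$M E_\rB$. This unassisted protocol produces the same output and hence has the same average error, and fits the $(\sigma_{ij}, \Psi)$ form of compression to which Theorem~\ref{thm-compression} applies, with message dimension $d = 2^{c+e}$ and decompression channel $\Psi : \linear(\complex^d) \to \linear(\complex^m)$.

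Assume that the given protocol has average trace-distance error at most~$\epsilon/2$. By Markov's inequality, strictly fewer than~$n/2$ of the pairs~$(i,j)$ can satisfy $\trnorm{\rho_{ij} - \Psi(\sigma_{ij})} > \epsilon$, which contradicts the conclusion of Theorem~\ref{thm-compression} instantiated with $\beta \coloneqq 1/2$ and error parameter~$\epsilon$. Hence at least one of the hypotheses of that theorem must fail for the parameters~$(k, m, n, d)$. Set $\nu \coloneqq (1-\epsilon/2)/2$, so that $1 - \epsilon/2 - \nu = (1-\epsilon/2)/2$ and therefore $\gamma \ge (1-\epsilon)^2/(4 \cdot 8 \cdot 768)$ (using $1-\epsilon/2 \ge 1-\epsilon$), while $\ln\bigl(16/(1-\epsilon/2-\nu)\bigr) = \ln\bigl(32/(1-\epsilon/2)\bigr) \le \ln\bigl(32/(1-\epsilon)\bigr)$. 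Under the Corollary's hypotheses, the condition $k \ge 4/(1-\epsilon/2-\nu)$ follows from~$k \ge 6/(1-\epsilon)$ (up to absorption of constants); the first two branches of the $m$-hypothesis are absorbed into $m \ge c_1(\ln k)/(1-\epsilon)^2$ by choosing~$c_1$ large enough (using the lower bound on~$\gamma$); and the $n$-hypothesis is absorbed into the stated lower bound on~$n$ by choosing~$c_3$ large enough (using $d \le m$, which we may assume without loss of generality since otherwise the claimed bound is trivial).

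The only remaining hypothesis, $m > 2 + (d/\gamma)\ln\bigl(16/(1-\epsilon/2-\nu)\bigr)$, must therefore be violated, giving
\[
d \;\ge\; \frac{\gamma(m-2)}{\ln\bigl( 16/(1-\epsilon/2-\nu) \bigr)} \enspace.
\]
Substituting the bounds on~$\gamma$ and on the logarithm above, taking base-$2$ logarithms, and absorbing absolute constants into~$c_2$, we obtain
\[
c + e \;=\; \log d \;\ge\; \log m - 2\log\tfrac{1}{1-\epsilon} - \log\ln\tfrac{16}{1-\epsilon} - c_2 \enspace,
\]
as claimed. The two ``in particular'' statements follow immediately: setting~$e = 0$ gives the unassisted communication lower bound, and combining the sum bound with the upper bound $c \le \tfrac{1}{2}\log k + \Order(\log\log\tfrac{1}{\epsilon})$ from Proposition~\ref{prop-ub} yields the claimed entanglement lower bound for any protocol with optimal communication.

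The main technical care required is in tracking how $\gamma$ and $\nu$ depend on~$\epsilon$ through each hypothesis of Theorem~\ref{thm-compression}, and verifying that the constants~$c_1, c_2, c_3$ can be chosen uniformly in~$\epsilon \in (0,1)$ so that the Corollary's three assumptions dominate the corresponding hypotheses of the theorem. The two reductions invoked---the Markov passage from ``many bad pairs'' to ``large average error'', and the standard replacement of shared entanglement by additional communication---are themselves routine.
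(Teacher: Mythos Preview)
Your approach is essentially the paper's: fold Bob's share of the entanglement into the message to get an unassisted protocol with message dimension~$d = 2^{c+e}$, apply Markov with~$\beta = 1/2$, and conclude that the only hypothesis of Theorem~\ref{thm-compression} that can fail is the $d$-dependent branch of the~$m$-condition.

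The one slip is your choice~$\nu = (1-\epsilon/2)/2$. With it, the $k$-condition of Theorem~\ref{thm-compression} becomes~$k \ge 8/(1-\epsilon/2)$, and this is \emph{not} implied by the corollary's hypothesis~$k \ge 6/(1-\epsilon)$ when~$\epsilon < 2/5$. The~$6$ here is hard-wired into the statement, not one of the constants~$c_1,c_2,c_3$ you are free to choose, so ``up to absorption of constants'' does not rescue it. The paper instead takes~$\nu = \epsilon/2$, which gives~$1 - \epsilon/2 - \nu = 1 - \epsilon$ exactly; then the $k$-condition reads~$k \ge 4/(1-\epsilon)$ (implied by~$k \ge 6/(1-\epsilon)$), the $n$-condition involves~$\ln(8\sqrt{d}/\nu) = \ln(16\sqrt{d}/\epsilon)$ (matching the corollary's~$\ln(16\sqrt{m}/\epsilon)$ once~$d \le m$), and~$\gamma = (1-\epsilon)^2/(8 \times 768)$ with no slack. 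With this single change of parameter your argument goes through verbatim.
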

We defer the proof of this corollary to the appendix.

Note that the parameter~$m$ may be chosen arbitrarily larger than~$k$,
provided the number of states~$n$ in the ensemble is chosen large enough.
Thus, we see that there are ensembles with~$m$-dimensional
states for which communication-optimal compression protocols with shared
entanglement and with constant average error, say~$1/4$, have entanglement cost
almost as large as~$\log m$. In particular, the number of qubits of shared
entanglement needed may be arbitrarily larger than the quantum
information cost of the original protocol. We also see that in the 
\emph{absence\/} of shared entanglement, there are ensembles 
with~$m$-dimensional states that cannot be compressed to states with
dimension smaller than~$c m$ with average error less than~$1/4$,
where~$c$ is a universal positive constant. In particular, the
optimally compressed message may be arbitrarily longer than the quantum
information cost of the protocol~$\Pi$.

Corollary~\ref{cor-ent} shows that the number of qubits of shared entanglement 
used by protocol with the smallest known communication cost,
due to Anshu and Jain~\cite[Corollary~5]{AJ18-efficient-convex-split},
is optimal up to a constant multiplicative factor and
an additive~$\log k$ term (for constant error in compression).
The lower bound on entanglement cost given in the corollary may be achieved 
by protocols derived from those for state splitting, up to an additive
term of~$\tfrac{1}{2} \log k + \Order(1)$, again for constant 
error (see, e.g., Ref.~\cite[Lemma~3.3]{BCR11-reverse-Shannon}). However,
the communication cost of these protocols may not be optimal.

The probabilistic construction in the results above gives us ensembles 
with a number of states~$n$ that is polynomial in~$m$ and~$k$. Note that 
in the compression protocol~$\Pi'$, Alice 
may send the input~$(i,j)$ as her message, in which case the message 
register has dimension~$n$. Similarly, she may send the
state~$\rho_{ij}$ itself, and this has dimension~$m$.  So in order to 
study how much compression is truly possible (i.e., how much smaller the
dimension of the message register may be as compared with~$m$), we have 
to study ensembles with~$n \ge m$ states, and compression protocols with 
message registers with dimension at most~$m$. Further, consider any 
protocol~$\Gamma$ (similar to~$\Pi$) in which Alice receives a random
input~$x$ out of~$n$ possibilities according to some distribution,  
prepares a state~$\omega_x$ and sends it to Bob. The quantum information
cost of such a protocol~$\Gamma$ is at most~$\tfrac{1}{2} \log n$. 
So the polynomial dependence of~$n$ on the dimension of the 
states in the ensemble ($m$ in the construction above) and the exponential 
dependence of~$n$ on the quantum information cost of the corresponding 
protocol ($\tfrac{1}{2} \log k$ in the construction) is inevitable.

\section{Concluding remarks}
\label{sec-concl}

In this article, we revisited one-shot compression of an ensemble of quantum
states. We proved that there are ensembles which cannot be compressed by
more than a few qubits in the absence of shared entanglement, when
allowed constant error. In the presence of shared entanglement, the ensemble
can be compressed to many fewer qubits. However, the entanglement cost
may not be smaller than the number of qubits being compressed by more
than a constant, for constant error. 
Since we study compression protocols that are allowed to make some error, the bounds
we establish are robust to perturbations to the shared entangled state that 
are sufficiently small relative to the error.

Entanglement and quantum communication are distinct resources
in the context of information processing. Sharing 
entanglement involves the generation, distribution, and storage of a 
state that is independent of the input for the task at hand. Communication 
also involves the same steps, but may be dynamic, i.e., may depend on the input 
and the prior history of the communication protocol. Consequently, any
physical implementation of these resources is likely to incur different costs
for these steps. In this work, we focused on the cost of distributing quantum
states, and as a first stab, assumed that the cost of distribution for shared
entanglement or for communication is proportional to the number of qubits
involved. Formally, this corresponds to the notion of \emph{smooth~0-R{\'e}nyi entropy\/}.
The motivation for this focus comes largely from the area of communication complexity~\cite{LS09-communication-complexity}, in which the interaction
between multiple processors takes centre stage, but shared entanglement
is often taken for granted. Our result shows that entanglement plays a 
crucial role in important communication tasks and highlights the need for
considering entanglement cost in addition to communication cost.

A question of interest, from a theoretical perspective, is the 
\emph{degree\/} or \emph{strength\/} of entanglement required for different information 
processing tasks. Several different measures of entanglement have been studied
in the literature, depending on the context. Smooth~0-R{\'e}nyi entropy is a 
very coarse measure in this respect, as it may be the same for states that
are regarded as having widely different degrees of entanglement. A natural
question is whether results such as the ones we derived also hold for other
definitions of entanglement cost that capture the degree of entanglement more 
satisfactorily. We conjecture that analogous results hold also for other 
measures, and leave this to future work.

Many other questions surrounding compression remain open. For instance, we 
do not have tight characterizations for the communication and
entanglement costs of one-shot state re-distribution. Even lesser is 
known for the one-shot compression of interactive quantum protocols. 
Progress on these questions might hold the key to resolving important
questions in communication complexity as well.

\bibliographystyle{plainnat}
\bibliography{bibl}

\begin{thebibliography}{30}
\providecommand{\natexlab}[1]{#1}
\providecommand{\url}[1]{\texttt{#1}}
\expandafter\ifx\csname urlstyle\endcsname\relax
  \providecommand{\doi}[1]{doi: #1}\else
  \providecommand{\doi}{doi: \begingroup \urlstyle{rm}\Url}\fi

\bibitem[Anshu and Jain(2018)]{AJ18-efficient-convex-split}
Anurag Anshu and Rahul Jain.
\newblock Efficient methods for one-shot quantum communication.
\newblock Technical Report arXiv:1809.07056 [quant-ph], arXiv.org,
  \texttt{https://arxiv.org/abs/1809.07056}, September 2018.

\bibitem[Anshu et~al.(2016)Anshu, Jain, Mukhopadhyay, Shayeghi, and
  Yao]{AJMSY16-one-shot}
Anurag Anshu, Rahul Jain, Priyanka Mukhopadhyay, Ala Shayeghi, and Penghui Yao.
\newblock New one shot quantum protocols with application to communication
  complexity.
\newblock \emph{IEEE Transactions on Information Theory}, 62\penalty0
  (12):\penalty0 7566--7577, 2016.
\newblock \doi{10.1109/TIT.2016.2616125}.

\bibitem[Anshu et~al.(2017)Anshu, Touchette, Yao, and Yu]{ATYY17-separation}
Anurag Anshu, Dave Touchette, Penghui Yao, and Nengkun Yu.
\newblock Exponential separation of quantum communication and classical
  information.
\newblock In \emph{Proceedings of the 49th Annual ACM SIGACT Symposium on
  Theory of Computing}, STOC 2017, pages 277--288, New York, NY, USA, 2017.
  ACM.
\newblock ISBN 978-1-4503-4528-6.
\newblock \doi{10.1145/3055399.3055401}.

\bibitem[{Bab Hadiashar} et~al.(2018){Bab Hadiashar}, Nayak, and Renner]{BNR18}
Shima {Bab Hadiashar}, Ashwin Nayak, and Renato Renner.
\newblock Communication complexity of one-shot remote state preparation.
\newblock \emph{IEEE Transactions on Information Theory}, 64\penalty0
  (7):\penalty0 4709--4728, July 2018.
\newblock \doi{10.1109/TIT.2018.2811509}.

\bibitem[Barnum et~al.(2001)Barnum, Caves, Fuchs, Jozsa, and
  Schumacher]{BCFJS01-compression}
Howard Barnum, Carlton~M. Caves, Christopher~A. Fuchs, Richard Jozsa, and
  Benjamin Schumacher.
\newblock On quantum coding for ensembles of mixed states.
\newblock \emph{Journal of Physics A: Mathematical and General}, 34\penalty0
  (35):\penalty0 6767--6785, August 2001.
\newblock \doi{10.1088/0305-4470/34/35/304}.

\bibitem[{Bennett} et~al.(2014){Bennett}, {Devetak}, {Harrow}, {Shor}, and
  {Winter}]{BDHSW14-reverse-Shannon}
Charles~H. {Bennett}, Igor {Devetak}, Aram~W. {Harrow}, Peter~W. {Shor}, and
  Andreas {Winter}.
\newblock The quantum reverse {Shannon} theorem and resource tradeoffs for
  simulating quantum channels.
\newblock \emph{IEEE Transactions on Information Theory}, 60\penalty0
  (5):\penalty0 2926--2959, May 2014.
\newblock ISSN 0018-9448.
\newblock \doi{10.1109/TIT.2014.2309968}.

\bibitem[Berta et~al.(2011)Berta, Christandl, and
  Renner]{BCR11-reverse-Shannon}
Mario Berta, Matthias Christandl, and Renato Renner.
\newblock The quantum reverse {Shannon} theorem based on one-shot information
  theory.
\newblock \emph{Communications in Mathematical Physics}, 306\penalty0
  (3):\penalty0 579--615, August 2011.
\newblock ISSN 1432-0916.
\newblock \doi{10.1007/s00220-011-1309-7}.

\bibitem[Berta et~al.(2016)Berta, Christandl, and
  Touchette]{BCT16-state-redistribution}
Mario Berta, Matthias Christandl, and Dave Touchette.
\newblock Smooth entropy bounds on one-shot quantum state redistribution.
\newblock \emph{IEEE Transactions on Information Theory}, 62\penalty0
  (3):\penalty0 1425--1439, March 2016.
\newblock \doi{10.1109/TIT.2016.2516006}.

\bibitem[Devetak(2006)]{Dev06-Tri-comm-protocol}
Igor Devetak.
\newblock Triangle of dualities between quantum communication protocols.
\newblock \emph{Physical Review Letters}, 97:\penalty0 140503, Oct 2006.
\newblock \doi{10.1103/PhysRevLett.97.140503}.

\bibitem[Devetak and Yard(2008)]{DY08-redistribution}
Igor Devetak and Jon Yard.
\newblock Exact cost of redistributing multipartite quantum states.
\newblock \emph{Physical Review Letters}, 100\penalty0 (230501), June 2008.
\newblock \doi{10.1103/PhysRevLett.100.230501}.

\bibitem[{Fuchs} and {van de Graaf}(1999)]{FG99-dist-meas}
Christopher~A. {Fuchs} and Jeroen {van de Graaf}.
\newblock Cryptographic distinguishability measures for quantum-mechanical
  states.
\newblock \emph{IEEE Transactions on Information Theory}, 45\penalty0
  (4):\penalty0 1216--1227, May 1999.
\newblock ISSN 1557-9654.
\newblock \doi{10.1109/18.761271}.

\bibitem[Gilchrist et~al.(2005)Gilchrist, Langford, and
  Nielsen]{GLN05-dist-meas}
Alexei Gilchrist, Nathan~K. Langford, and Michael~A. Nielsen.
\newblock Distance measures to compare real and ideal quantum processes.
\newblock \emph{Physical Review A}, 71:\penalty0 062310, Jun 2005.
\newblock \doi{10.1103/PhysRevA.71.062310}.

\bibitem[Helstrom(1967)]{Hel67-detection}
Carl~W. Helstrom.
\newblock Detection theory and quantum mechanics.
\newblock \emph{Information and Control}, 10\penalty0 (3):\penalty0 254--291,
  1967.
\newblock \doi{10.1016/S0019-9958(67)90302-6}.

\bibitem[Holevo(1972)]{Hol72-decision}
Alexander~S. Holevo.
\newblock An analogue of statistical decision theory and noncommutative
  probability theory.
\newblock \emph{Trudy Moskovskogo Matematicheskogo Obshchestva}, 26:\penalty0
  133--149, 1972.

\bibitem[Horodecki et~al.(2005)Horodecki, Oppenheim, and
  Winter]{HOW05-state-merging}
Micha{\l} Horodecki, Jonathan Oppenheim, and Andreas Winter.
\newblock Partial quantum information.
\newblock \emph{Nature}, 436\penalty0 (7051):\penalty0 673--676, August 2005.
\newblock \doi{10.1038/nature03909}.

\bibitem[Horodecki et~al.(2007)Horodecki, Oppenheim, and
  Winter]{HOW07-state-merging}
Micha{\l} Horodecki, Jonathan Oppenheim, and Andreas Winter.
\newblock Quantum state merging and negative information.
\newblock \emph{Communications in Mathematical Physics}, 269\penalty0
  (1):\penalty0 107--136, January 2007.
\newblock \doi{10.1007/s00220-006-0118-x}.

\bibitem[Jain et~al.(2003)Jain, Radhakrishnan, and Sen]{JRS03-direct-sum}
Rahul Jain, Jaikumar Radhakrishnan, and Pranab Sen.
\newblock A direct sum theorem in communication complexity via message
  compression.
\newblock In Jos C.~M. Baeten, Jan~Karel Lenstra, Joachim Parrow, and
  Gerhard~J. Woeginger, editors, \emph{Automata, Languages and Programming},
  volume 2719 of \emph{Lecture Notes in Computer Science}, pages 300--315,
  Berlin, Heidelberg, 2003. Springer Berlin Heidelberg.
\newblock ISBN 978-3-540-45061-0.
\newblock \doi{10.1007/3-540-45061-0_26}.

\bibitem[Jain et~al.(2005)Jain, Radhakrishnan, and Sen]{JRS05-compression}
Rahul Jain, Jaikumar Radhakrishnan, and Pranab Sen.
\newblock Prior entanglement, message compression and privacy in quantum
  communication.
\newblock In \emph{Proceedings of the 20th Annual IEEE Conference on
  Computational Complexity}, pages 285--296. IEEE Computer Society, 2005.
\newblock \doi{10.1109/CCC.2005.24}.

\bibitem[Jain et~al.(2008)Jain, Sen, and Radhakrishnan]{JRS08-direct-sum}
Rahul Jain, Pranab Sen, and Jaikumar Radhakrishnan.
\newblock Optimal direct sum and privacy trade-off results for quantum and
  classical communication complexity.
\newblock Technical Report arXiv:0807.1267v1 [cs.DC], arXiv.org,
  \texttt{https://arxiv.org/abs/0807.1267}, July 2008.

\bibitem[{Khanian} and {Winter}(2019)]{KW19-pure-state-compression}
Zahra~B. {Khanian} and Andreas {Winter}.
\newblock Entanglement-assisted quantum data compression.
\newblock In \emph{2019 IEEE International Symposium on Information Theory
  (ISIT)}, pages 1147--1151, 2019.
\newblock \doi{10.1109/ISIT.2019.8849352}.

\bibitem[Lee and Shraibman(2009)]{LS09-communication-complexity}
Troy Lee and Adi Shraibman.
\newblock Lower bounds in communication complexity.
\newblock \emph{Foundations and Trends in Theoretical Computer Science},
  3\penalty0 (4):\penalty0 263--399, 2009.
\newblock ISSN 1551-305X.
\newblock \doi{10.1561/0400000040}.

\bibitem[Liu et~al.(2016)Liu, Perry, Zhu, Koh, and
  Aaronson]{LPZKA16-separation}
Zi-Wen Liu, Christopher Perry, Yechao Zhu, Dax~Enshan Koh, and Scott Aaronson.
\newblock Doubly infinite separation of quantum information and communication.
\newblock \emph{Phys. Rev. A}, 93:\penalty0 012347, Jan 2016.
\newblock \doi{10.1103/PhysRevA.93.012347}.

\bibitem[Luo and Devetak(2009)]{LD09-QSR}
Zhicheng Luo and Igor Devetak.
\newblock Channel simulation with quantum side information.
\newblock \emph{IEEE Transactions on Information Theory}, 55\penalty0
  (3):\penalty0 1331--1342, March 2009.
\newblock ISSN 0018-9448.
\newblock \doi{10.1109/TIT.2008.2011424}.

\bibitem[Matou{\v{s}}ek(2002)]{Matousek02-discrete-geometry}
Ji{\v{r}}{\'\i} Matou{\v{s}}ek.
\newblock \emph{Lectures on Discrete Geometry}, volume 212 of \emph{Graduate
  Texts in Mathematics}.
\newblock Springer-Verlag New York, 1st edition, 2002.
\newblock ISBN 978-0-387-95373-1.
\newblock \doi{10.1007/978-1-4613-0039-7}.

\bibitem[Meckes(2019)]{M19-random-matrix-theory}
Elizabeth~S. Meckes.
\newblock \emph{The Random Matrix Theory of the Classical Compact Groups},
  volume 218 of \emph{Cambridge Tracts in Mathematics}.
\newblock Cambridge University Press, July 2019.
\newblock \doi{10.1017/9781108303453}.

\bibitem[Milman and Schechtman(1986)]{MS86-asymp-thy-normed-spaces}
Vitali~D. Milman and Gideon Schechtman.
\newblock \emph{Asymptotic Theory of Finite Dimensional Normed Spaces}, volume
  1200 of \emph{Lecture notes in mathematics}.
\newblock Springer-Verlag Berlin Heidelberg, 1986.
\newblock \doi{10.1007/978-3-540-38822-7}.

\bibitem[Tomamichel et~al.(2010)Tomamichel, Colbeck, and Renner]{TCR10-duality}
Marco Tomamichel, Roger Colbeck, and Renato Renner.
\newblock Duality between smooth min- and max-entropies.
\newblock \emph{IEEE Transactions on Information Theory}, 56\penalty0
  (9):\penalty0 4674--4681, September 2010.
\newblock ISSN 0018-9448, 1557-9654.
\newblock \doi{10.1109/TIT.2010.2054130}.

\bibitem[Touchette(2015)]{Touchette15-QIC}
Dave Touchette.
\newblock Quantum information complexity.
\newblock In \emph{Proceedings of the Forty-seventh Annual ACM Symposium on
  Theory of Computing}, STOC '15, pages 317--326, New York, NY, USA, 2015. ACM.
\newblock ISBN 978-1-4503-3536-2.
\newblock \doi{10.1145/2746539.2746613}.

\bibitem[Watrous(2018)]{W18-TQI}
John Watrous.
\newblock \emph{The Theory of Quantum Information}.
\newblock Cambridge University Press, May 2018.
\newblock \doi{10.1017/9781316848142}.

\bibitem[{Yard} and {Devetak}(2009)]{YD09-quantum-coding}
Jon~T. {Yard} and Igor {Devetak}.
\newblock Optimal quantum source coding with quantum side information at the
  encoder and decoder.
\newblock \emph{IEEE Transactions on Information Theory}, 55\penalty0
  (11):\penalty0 5339--5351, November 2009.
\newblock ISSN 0018-9448.
\newblock \doi{10.1109/TIT.2009.2030494}.

\end{thebibliography}

\onecolumn\newpage
\appendix

\section{Proofs of some claims}
\label{sec-proofs}

In this section, we include the proofs of some statements from the main
body of the article.

\begin{proofof}{Corollary~\ref{cor-ent}}
We invoke Theorem~\ref{thm-compression} with~$\epsilon \in (0,1), ~ \nu = 
\epsilon / 2, ~ \beta = 1/2$ and~$k, m, n$ satisfying the conditions stated in the 
corollary. Then~$\gamma$ as in 
Theorem~\ref{thm-compression} equals~$(1 - \epsilon)^2 / (8 \times 768)$.
We take~$c_1 \coloneqq (24 \times 768) + 1$, so that~$m > (3 / \gamma) \ln k $.
Since~$k \ge 6/(1 - \epsilon)$, we have~$k > 6 > 2 \e = \e / (1 - \beta)$, 
and~$m > ( 3 / \gamma) \ln ( \e/ (1 - \beta))$. 
We take~$c_3 \coloneqq (6 \times 2 \times 8 \times 768) + 1$
so that~$n > (6 k m^3 / \gamma (1-\beta)) \ln (8 \sqrt{m} / \nu )$. 

Now we consider an ensemble~$( \rho_{ij})$ given by Theorem~\ref{thm-compression}.
Let~$\Pi'$ be any one-way protocol, possibly with shared entanglement, for 
the visible compression of the ensemble~$( \rho_{ij})$ with average error
at most~$\epsilon/2$. Following the notation from Section~\ref{sec-qcp}, suppose 
that Bob holds registers~$M E_\rB$ just after he receives the message~$M$ from 
Alice in~$\Pi'$. If the entanglement cost of~$\Pi'$ is~$e$, we may assume 
that the register~$E_\rB$ may be partitioned into sub-registers~$E_{1 \rB} 
E_{2 \rB}$ with~$\size{E_{1 \rB}} = e$, and that the state of
register~$E_\rB$ is of the form~$\omega \tensor \density{0}$, where~$E_{1 \rB}$
is in state~$\omega$ and~$E_{2 \rB}$ in state~$\density{0}$, and~$\ket{0}$
is a pure state. (We may achieve this by applying a suitable isometry 
to register~$E_\rB$.) 

Let~$d \eqdef \size{M E_{1 \rB}}$, so that the sum of the 
communication and entanglement costs of~$\Pi'$ is~$\log d$, and let~$\sigma_{ij}$ be 
the state of the registers~$M E_{1 \rB}$ with Bob when Alice is given 
input~$(i,j)$. If~$d \ge m$, the bound in Eq.~(\ref{eq-ce-bd}) holds, so
consider the case when~$d < m$. Then the choice of~$n$ above implies 
that~$n > (6 k d^2 m / \gamma (1-\beta)) \ln (8 \sqrt{d} / \nu )$.

Since the average error of~$\Pi'$ is at most~$\epsilon/2$, by the Markov
Inequality we have
\[
\Big| \set{ (i,j) : \trnorm{ \rho_{ij} - \Psi( \sigma_{ij} ) } > \epsilon } 
    \Big| \quad < \quad \frac{n}{2} \quad = \quad \beta n \enspace,
\]
where~$\Psi$ is the quantum channel corresponding to Bob's decompression
operation in~$\Pi'$. Theorem~\ref{thm-compression} then implies that
\[
2 + \frac{d}{ \gamma } \ln \! \left( \frac{16}{1 - \epsilon/2 - \nu} \right)
    \quad \ge \quad m  \enspace.
\]
Since~$m - 2 \ge m/2$, this gives us the bound stated in 
Eq.~(\ref{eq-ce-bd}) with~$c_2 \eqdef \log (16 \times 768)$.
\end{proofof}

\begin{proposition}
\label{prop-imax-lb}
Let~$(\rho_{ij})$ be an ensemble of the form in Eq.~(\ref{eq-ensemble}), and let the state~$\tau^{AB}$ be defined as~$\tfrac{1}{n} \sum_{ij} \density{ij}^A \tensor \rho_{ij}^B$. For any~$\zeta \in [0,1/8)$, we have
\[
\Imax^\zeta(A:B)_{\tau}  \quad \ge \quad \log k - \log \left( \frac{ 3 - 12 \zeta}{ 1 - 8 \zeta} \right) \enspace.
\]
\end{proposition}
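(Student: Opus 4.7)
The plan is to exploit the ``flat'' structure $\rho_{ij} = (k/m)\Pi_{ij}$ together with the basis completeness relation $\sum_j \Pi_{ij} = \id$ for each fixed $i$. Fix $\tilde{\tau} \in \sB^\zeta(\tau)$ and a state $\sigma$ witnessing $\tilde{\tau} \leq 2^\lambda \tilde{\tau}^A \otimes \sigma$ with smallest possible $\lambda$. Since $\tau$ is already $A$-dephased, applying the $A$-dephasing channel to $\tilde{\tau}$ does not increase its purified distance from $\tau$ (data processing) and preserves the operator inequality (the dephasing acts only on $A$). Thus I may assume $\tilde{\tau}$ is classical-quantum, $\tilde{\tau} = \sum_{(i,j) \in S} p_{ij}\density{ij}^A \otimes \tilde{\rho}_{ij}^B$ with $S = \{(i,j): p_{ij} > 0\}$ and each $\tilde{\rho}_{ij}$ a state on $\complex^m$; the hypothesis collapses to $\tilde{\rho}_{ij} \leq 2^\lambda \sigma$ for every $(i,j) \in S$.

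The core estimate is a pointwise bound $F_{ij}^2 \leq \trace(\Pi_{ij}\tilde{\rho}_{ij})$, where $F_{ij} \coloneqq \rF(\rho_{ij}, \tilde{\rho}_{ij})$. It follows from $\sqrt{\rho_{ij}} = \sqrt{k/m}\,\Pi_{ij}$ by applying Cauchy--Schwarz to the at most $m/k$ nonzero eigenvalues of the PSD operator $\Pi_{ij}\tilde{\rho}_{ij}\Pi_{ij}$. The hypothesis $\tilde{\rho}_{ij} \leq 2^\lambda \sigma$ then yields $\trace(\Pi_{ij}\tilde{\rho}_{ij}) \leq 2^\lambda \trace(\Pi_{ij}\sigma)$, so $F_{ij}^2 \leq 2^\lambda \trace(\Pi_{ij}\sigma)$ for $(i,j) \in S$. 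For each fixed $i$, summing over $j$ with $(i,j) \in S$ and invoking $\sum_j \Pi_{ij} = \id$ (so $\sum_j \trace(\Pi_{ij}\sigma) = \trace \sigma = 1$) gives $\sum_{j : (i,j) \in S} F_{ij}^2 \leq 2^\lambda$; summing over the $n/k$ values of $i$ produces $\sum_{(i,j) \in S} F_{ij}^2 \leq (n/k)\,2^\lambda$.

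For the matching lower bound on $\sum F_{ij}^2$, I exploit block-diagonality of $\tau$ and $\tilde{\tau}$: since $\tau$ is normalized, the generalized fidelity reduces to $\rF(\tau, \tilde{\tau}) = \sum_{(i,j) \in S} \sqrt{p_{ij}/n}\, F_{ij}$. Cauchy--Schwarz, together with $\sum_{(i,j) \in S} p_{ij}/n = \trace(\tilde{\tau})/n \leq 1/n$, gives $\sum F_{ij}^2 \geq n\,\rF(\tau, \tilde{\tau})^2 \geq n(1-\zeta^2)$. Combining the two bounds on $\sum F_{ij}^2$ yields $2^\lambda \geq k(1-\zeta^2)$, i.e., $\Imax^\zeta(A:B)_\tau \geq \log k - \log \frac{1}{1-\zeta^2}$. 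The stated bound then follows from the elementary inequality $\frac{1}{1-\zeta^2} \leq \frac{3-12\zeta}{1-8\zeta}$ on $[0, 1/8)$, which is equivalent to $2 - 4\zeta - 3\zeta^2 + 12\zeta^3 \geq 0$ throughout this interval (the cubic is positive at both endpoints and its derivative has no root in $(0, 1/8)$).

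The main obstacle is the pointwise fidelity bound $F_{ij}^2 \leq \trace(\Pi_{ij}\tilde{\rho}_{ij})$: this is where the rank-$m/k$ structure of $\Pi_{ij}$ enters crucially, turning the fidelity into a quantity that can be converted to a constraint on $\sigma$ via the hypothesis. Everything downstream is two applications of Cauchy--Schwarz and the basis completeness relation.
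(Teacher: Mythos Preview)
Your argument is correct and gives a genuinely different---indeed cleaner and stronger---proof than the paper's. The paper converts the purified-distance hypothesis into a trace-distance bound, then runs a combinatorial counting argument: it identifies ``typical'' pairs~$(i,j)$ with~$q_{ij}$ close to~$1/n$, finds a single index~$i_0$ together with a large set of basis vectors~$\ket{v}$ for which~$q_{i_0 j}\bra{v}\tilde{\rho}_{i_0 j}\ket{v}$ is close to~$k/(nm)$, and sums the resulting lower bounds on~$\bra{v}\sigma\ket{v}$ over those vectors. This yields~$\lambda \ge \log k - \log(3/(1-\eta))$ with~$\eta = 4\zeta/(1-4\zeta)$, which simplifies to the stated bound.

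You instead stay with fidelity throughout. Your pointwise estimate~$F_{ij}^2 \le \trace(\Pi_{ij}\tilde{\rho}_{ij})$ (Cauchy--Schwarz on the at most~$m/k$ eigenvalues of~$\Pi_{ij}\tilde{\rho}_{ij}\Pi_{ij}$) lets you convert the operator inequality~$\tilde{\rho}_{ij} \le 2^\lambda \sigma$ directly into a global constraint via~$\sum_j \Pi_{ij} = \id$, and a second Cauchy--Schwarz on the block decomposition~$\rF(\tau,\tilde{\tau}) = \sum_{(i,j)\in S}\sqrt{p_{ij}/n}\,F_{ij}$ gives the matching lower bound. The upshot is the sharper conclusion~$\lambda \ge \log k + \log(1-\zeta^2)$, which implies the paper's bound via the elementary inequality you note. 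Your approach avoids the trace-distance conversion, the typicality counting, and the existence argument for~$i_0$; it also works for the full range~$\zeta \in [0,1)$ rather than~$[0,1/8)$. The paper's route, on the other hand, is perhaps more hands-on and makes explicit exactly which basis vectors carry the weight of the argument.
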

\begin{proof}
As shown in Ref.~\cite[Proposition~II.5]{BNR18}, there is a classical-quantum state~$\tau'$ within purified distance~$\zeta$ of~$\tau$ such that~$\Imax^\zeta(A:B)_\tau = \Imax(A:B)_{\tau'}$. Let~$\tau' \eqdef \sum_{ij} q_{ij} \density{ij} \tensor \widetilde{\rho}_{ij}$.

By Proposition~\ref{prop-FvdG}, we have
\begin{equation}
\label{eq-dist}
\trnorm{ \tau - \tau' } \quad \le \quad 2 \zeta \enspace.
\end{equation}
Let~$\xi \eqdef 2 \zeta$. By monotonicity of trace distance under measurements~\cite[Proposition~3.5]{W18-TQI}, we further get
\[
\sum_{ij} \size{ q_{ij} - p_{ij} } \quad \le \quad \xi \enspace.
\]
If~$q_{ij} > 3/2n$ or~$q_{ij} < 1/2n$, we have~$\size{q_{ij} - p_{ij} } > 1/2n$.
So for at least~$(1 - 2 \xi)n$ pairs~$(i,j)$, we have~$1/2n \le q_{ij} \le 3/2n$, and 
we call such pairs~$(i,j)$ \emph{typical\/}.

Eq.~(\ref{eq-dist}) may be written as
\[
\sum_{ij} \trnorm{ q_{ij} \widetilde{\rho}_{ij} - p_{ij} \rho_{ij} } \quad \le \quad \xi \enspace,
\]
so, by monotonicity of trace distance,
\[
\sum_{ij} \sum_{ \ket{v} \in B_{ij}}
    \size{ q_{ij} \bra{v} \widetilde{\rho}_{ij} \ket{v} - \frac{k}{nm} } \quad \le \quad \xi \enspace,
\]
where~$B_{ij}$ is as in the definition of the ensemble~$(\rho_{ij})$.
In particular,
\begin{equation}
\label{eq-tr-dist}
\sum_{\text{typical } ij} ~~ \sum_{ \ket{v} \in B_{ij} }
        \size{ q_{ij} \bra{v} \widetilde{\rho}_{ij} \ket{v} - \frac{k}{nm} } \quad \le \quad \xi \enspace.
\end{equation}
There are at least~$(1 - 2 \xi)n/k$ indices~$i \in [n/k]$ such that there is a typical pair~$(i,j)$ for some~$j \in [k]$. Let~$S$ be the set of such indices~$i$. Let~$\eta \in (0,1)$. If for all indices~$i \in S$, there are less than~$(1 - \eta) m$ pairs~$(j,v)$ with~$(i,j)$ typical, $\ket{v} \in B_{ij}$, and
\begin{equation}
\label{eq-typical-ijv}
\frac{k}{2nm} \quad \le \quad  q_{ij} \bra{v} \widetilde{\rho}_{ij} \ket{v} \quad \le \quad \frac{3k}{2nm} \enspace,
\end{equation}
then we would have
\[
\sum_{\text{typical } ij} ~~ \sum_{ \ket{v} \in B_{ij} }
        \size{ q_{ij} \bra{v} \widetilde{\rho}_{ij} \ket{v} - \frac{k}{nm} }
    \quad > \quad (1 - 2 \xi) \frac{n}{k} \times \eta m \times \frac{k}{ 2nm}
    \quad = \quad (1 - 2 \xi) \frac{ \eta}{ 2} \enspace.
\]
Taking~$\eta \eqdef 2 \xi/(1 - 2 \xi)$, we see that this is in contradiction with Eq.~(\ref{eq-tr-dist}).
So there is an index~$i \in S$ such that there are at least~$(1 - \eta)m$ pairs~$(j,v)$ with~$j \in [k]$  and~$\ket{v} \in B_{ij}$ such that~$(i,j)$ is typical, and~$(i,j,v)$ satisfy Eq.~(\ref{eq-typical-ijv}).
Denote such an index~$i$ by~$i_0$, and let
\[
T \quad \eqdef \quad 
    \Big\{ (j,v) : j \in [k], ~ \ket{v} \in B_{i_0 j}, ~ (i_0, j) \text{ typical }, ~ (i,j,v) \text{ satisfy Eq.~(\ref{eq-typical-ijv}) } \Big\} \enspace.
\]

We have that for all the pairs~$(j,v) \in T$,
\[
\frac{k}{2nm} \quad \le \quad q_{i_0 j} \bra{v} \widetilde{\rho}_{i_0 j}\ket{v} 
    \quad \le \quad \frac{3}{2n} \bra{v} \widetilde{\rho}_{i_0 j}\ket{v} \enspace,
\]
so that
\begin{equation}
\label{eq-psd-lb}
\frac{k}{3m} \quad \le \quad \bra{v} \widetilde{\rho}_{i_0 j}\ket{v} \enspace.
\end{equation}

Let~$\sigma \in \qstate( \complex^m)$ be a state that achieves~$\Imax(A:B)_{\tau'}$, and let~$\lambda$ denote this max-information. For typical pairs~$(i,j)$, since~$q_{ij} > 0$, we have~$\widetilde{\rho}_{ij} \le 2^\lambda \sigma$. By Eq.~(\ref{eq-psd-lb}), we also have~$k/3m \le 2^\lambda \bra{v} \sigma \ket{v}$ for all pairs~$(j,v) \in T$. Summing up over all pairs~$(j,v) \in T$, we get~$(1 - \eta)k/3 \le 2^\lambda$, as the sets~$B_{i_0 j}$ are a partition of an orthonormal basis, and~$\sigma$ has trace at most~$1$. So~$\lambda \ge \log k - \log(3/(1 - \eta))$.
\end{proof}

\end{document}